% !TeX spellcheck = en_US
\documentclass[a4paper,pdfa,cleveref, autoref,USenglish,thm-restate]{lipics-v2021}
%This is a template for producing LIPIcs articles. 
%See lipics-v2021-authors-guidelines.pdf for further information.
%for A4 paper format use option "a4paper", for US-letter use option "letterpaper"
%for british hyphenation rules use option "UKenglish", for american hyphenation rules use option "USenglish"
%for section-numbered lemmas etc., use "numberwithinsect"
%for enabling cleveref support, use "cleveref"
%for enabling autoref support, use "autoref"
%for anonymousing the authors (e.g. for double-blind review), add "anonymous"
%for enabling thm-restate support, use "thm-restate"
%for enabling a two-column layout for the author/affilation part (only applicable for > 6 authors), use "authorcolumns"
%for producing a PDF according the PDF/A standard, add "pdfa"
\nolinenumbers
%\pdfoutput=1 %uncomment to ensure pdflatex processing (mandatatory e.g. to submit to arXiv)
\hideLIPIcs  %uncomment to remove references to LIPIcs series (logo, DOI, ...), e.g. when preparing a pre-final version to be uploaded to arXiv or another public repository

%\graphicspath{{./graphics/}}%helpful if your graphic files are in another directory

%%%%%%%%%%%%%%%%%%%%%%%%%%%%%%%% Defining a problem
%%%%%%%%%%%%%%%%%%\problemdef{NAME} {Input} {Output}
\usepackage{tabularx}
\newcommand{\problemdef}[3]{
	\begin{center}
		\begin{minipage}{0.95\textwidth}
			\noindent
			#1
			\vspace{5pt}\\
			\setlength{\tabcolsep}{3pt}
			\begin{tabularx}{\textwidth}{@{}lX@{}}
				\textbf{Input:}& #2 \\
				\textbf{Question:}& #3
			\end{tabularx}
		\end{minipage}
	\end{center}
}

\bibliographystyle{plainurl}% the mandatory bibstyle

\title{Directed Temporal Tree Realization for Periodic Public Transport: Easy and Hard Cases} 

\titlerunning{Directed Temporal Tree Realization for Periodic Public Transport} 

\author{Julia Meusel}{Martin Luther University Halle-Wittenberg, Germany}{julia.meusel@informatik.uni-halle.de}{https://orcid.org/0009-0001-2880-1390}{}{} 

\author{Matthias Müller-Hannemann}{Martin Luther University Halle-Wittenberg, Germany}{matthias.mueller-hannemann@informatik.uni-halle.de}{https://orcid.org/0000-0001-6976-0006}{}

\author{Klaus Reinhardt}{Martin Luther University Halle-Wittenberg, Germany}{klaus.reinhardt@informatik.uni-halle.de}{https://orcid.org/0009-0002-7002-4051}{}

\authorrunning{J. Meusel, M. Müller-Hannemann, K. Reinhardt} 
% mandatory. First: Use abbreviated first/middle names. Second (only in severe cases): Use first author plus 'et al.'

\Copyright{Julia Meusel, Matthias Müller-Hannemann, Klaus Reinhardt} 
% LIPIcs license is "CC-BY";  http://creativecommons.org/licenses/by/3.0/
%\ccsdesc[500]{Theory of computation~Problems, reductions and completeness}
\ccsdesc[500]{Theory of computation~Graph algorithms analysis}
\ccsdesc[500]{Mathematics of computing~Discrete mathematics}

\keywords{Periodic timetabling, service quality, temporal graph, graph realization, complexity, fastest temporal path} %TODO mandatory; please add comma-separated list of keywords

\category{} %optional, e.g. invited paper

\relatedversiondetails{A 5-page version has appeared as a brief announcement in SAND 2025}{https://doi.org/10.4230/LIPIcs.SAND.2025.21} %optional, e.g. full version hosted on arXiv, HAL, or other respository/website
\relatedversiondetails{Full version to appear in ATMOS 2025}{https://doi.org/10.4230/OASIcs.ATMOS.2025.8}
%\relatedversiondetails[linktext={opt. text shown instead of the URL}, cite=DBLP:books/mk/GrayR93]{Classification (e.g. Full Version, Extended Version, Previous Version}{URL to related version} %linktext and cite are optional

%\supplement{}%optional, e.g. related research data, source code, ... hosted on a repository like zenodo, figshare, GitHub, ...
%\supplementdetails[linktext={opt. text shown instead of the URL}, cite=DBLP:books/mk/GrayR93, subcategory={Description, Subcategory}, swhid={Software Heritage Identifier}]{General Classification (e.g. Software, Dataset, Model, ...)}{URL to related version} %linktext, cite, and subcategory are optional

%\funding{(Optional) general funding statement \dots}%optional, to capture a funding statement, which applies to all authors. Please enter author specific funding statements as fifth argument of the \author macro.

%\acknowledgements{I want to thank \dots}%optional

%\nolinenumbers %uncomment to disable line numbering

%%%%%%%%%%%%%%%%%%%%%%%%%%%%%%%%%%%%%%%%%%%%%%%%%%%%%%

\begin{document}

\maketitle

\begin{abstract}

We study the complexity of the \emph{directed periodic temporal graph realization} problem.  
This work is motivated by the design of periodic schedules in public transport with constraints on the quality of service. Namely, we require that the fastest path between (important) pairs of vertices is upper bounded by a specified maximum duration, encoded in an upper distance matrix $D$. While previous work has considered the undirected version of the problem, the application in public transport schedule design requires the flexibility to assign different departure times to the two directions of an edge.  
A problem instance can only be feasible if all values of the distance matrix are at least shortest path distances. However, the task of realizing exact fastest path distances in a periodic temporal graph is often too restrictive. Therefore, we introduce a minimum slack parameter $k$ that describes a lower bound on the maximum allowed waiting time on each path.
We concentrate on tree topologies and provide a full characterization of the complexity landscape with respect to the period $\Delta$ and the minimum slack parameter~$k$, showing a sharp threshold between NP-complete cases and cases which are always realizable. We also provide  hardness results for the special case of period $\Delta = 2$ for general directed and undirected graphs. 
\end{abstract}
\section{Introduction}
\label{sec:Introduction}
%%%%%%%%%%%%%%%%%%%%%%%%%%%%General%%%%%%%%%%%%%%%%%%%%%%%%%%%
%%%%%%%%%%%%%%%%%%%%%%%%%%%%%%%%%%%%%%%%%%%%%%%%%%%%%%%%%%%%%%

Designing periodic schedules for public transport is notoriously difficult. Periodic schedules are desirable for several practical and operational reasons. They are easier to memorize, and travelers can better plan their journeys  if services run at regular intervals. Periodicity also enables coordinated connections between different lines at central hubs and simplifies crew and vehicle scheduling. In this paper, we consider the quality of service provided by a periodic schedule from a passenger's perspective. Specifically, we address the question of whether it is possible to design a periodic schedule for a given network that guarantees travel time bounds between important pairs of stops. We model this as a graph realization problem.

Graph realization problems are a central area of research that has been studied extensively since the 1960s for undirected~\cite{erdos1960graphs, Hakimi1965DistanceMO,Havel1955} and directed graphs~\cite{Anstee_1982,CHEN1966406,fulkerson1960zero}.
Given a set of constraints, the objective is to find a graph that satisfies them, or to decide that no such graph exists. 
Restrictions on degrees~\cite{AYOUB1970303,erdos1960graphs,Havel1955}, distances between vertices~\cite{BARNOY2024114810,Hakimi1965DistanceMO,tamura1993realization},  eccentricities~\cite{Lesniak1975}, and connectivity~\cite{Edmonds1964,KleitmanWang1976} have been studied in detail. 
Recently, the study of realization problems on temporal graphs was started by Klobas et al.~\cite{TGR_arxive,TGR_sand}.
Temporal graphs are graphs that have a fixed set of vertices and a set of edges that changes over time. 
Each edge of a static graph is assigned a set of timestamps at which it is active. In a periodic temporal graph, the set of timestamps is repeated periodically for all edges. 
Informally, a temporal path is a sequence of consecutive edges in the underlying static graph and corresponding increasing timestamps at which they are active.
Klobas et al.\ and Mertzios et al.\ examined restrictions on the travel time between pairs of vertices in periodic temporal graphs: upper bounds as well as exact values~\cite{TGR_arxive,TGR_sand,ubTTR_pre}. 

This leads to several variants of the \textsc{Temporal Graph Realization} problem (\textsc{TGR}) that have been studied:
In the periodic \textsc{TGR}, all timestamps are repeated periodically. The \emph{simple} periodic \textsc{TGR} allows only one timestamp per period, unlike the \emph{multi-label} periodic \textsc{TGR}.
The constraints on fastest travel times can be either exact values or upper bounds.
If the given underlying graph is a tree, we call the problem \textsc{Temporal Tree Realization} (\textsc{TTR}).
In addition to communication networks such as social networks or satellite links, Klobas et al.\ mention transportation networks as examples of potential applications for the \textsc{TTR}~\cite{TGR_arxive,TGR_sand}.
However, unlike the flow of information in satellite links, transportation networks typically do not carry passengers on an edge in both directions at the same time.
Therefore, we examine the directed case as a natural generalization. 
Since most public transportation lines run in both directions, we will mostly consider bidirected graphs.
In tree structures this is necessary to ensure that every vertex is reachable from every other vertex. 

With an application in public transport in mind, 
the input graph models the infrastructure network where vertices correspond to locations of stops (or stations) and edges connect neighboring stops served by a bus, tram, train or the like. In public transport, typical values for the period $\Delta$ are 5, 10, 15 or 20 minutes in urban transport, and 30, 60 or 120 minutes in long-distance train networks. The timestamp of an edge $(u,v)$ can be interpreted as the departure time of some vehicle at $u$.
In a practical setting, traveling along an edge $e$ requires $\ell_e \in \mathbb{N}$ time. An edge of length~$\ell$ can be equivalently replaced by a simple path of $\ell$ edges of unit length as shown in~\cref{fig:MinEx}. While such a transformation blows up the graph size, it does not change the complexity of the problem (since hardness results are established even for unit length graphs). For simplicity, we will therefore consider only unit-length graphs.

\begin{figure}[tb]
	\begin{subfigure}[t]{0.23\textwidth}
		\centering
		\includegraphics[scale=0.62]{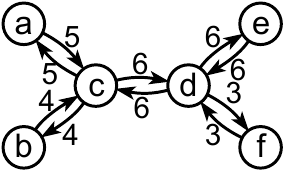}
		\caption{An infrastructure network $G$ labeled with distances (in minutes) between vertices (stops).}
		\label{fig:MinEx1}
	\end{subfigure}
	\hfill
	\begin{subfigure}[t]{0.74\textwidth}
		\centering
		\includegraphics[scale=0.62]{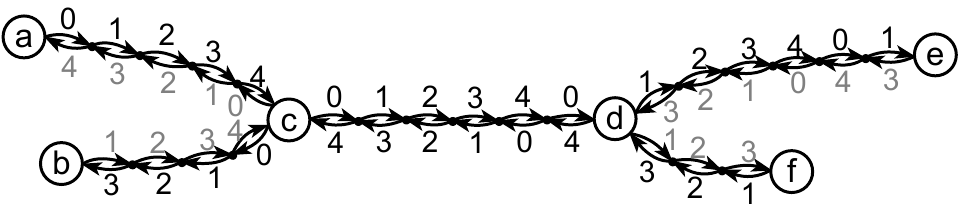}
		\caption{A graph in which phantom stops have been inserted so that all edges have unit length. Here, the labels define a schedule where each value denotes a departure time with respect to a global period of $\Delta=5$. The resulting graph with its labeling forms a periodic temporal graph.}
		\label{fig:MinEx2}
	\end{subfigure}
	\caption{Example: The infrastructure network $G=(V,E)$ with edge lengths (left) can be modelled with unit length edges by inserting phantom stops between already existing stops (right). No upper bounds are set for travel times from and to these phantom stops. Suppose we have a period of $\Delta=5$ and the following upper bounds: $D_{a,e} = 17$, $D_{a,b} = 9$, and $D_{f,b} = 13$. Assuming, without loss of generality, that the label of the edge connecting $a$ to the first phantom stop is assigned the timestamp 0, the given upper bounds on the travel times enforce the black edge labels.
 A temporal graph with edge labels as specified respects these upper bounds. The timestamps of a fastest temporal path from $f$ to $e$ may be, for example, $1,2,3,6,7,8,9,10,11$, giving it a duration of~11, whereas the static distance of $f$ and $e$ is only $9$.  On this fastest path, one would wait at vertex~$d$ for two timesteps before traversing the next edge with label $1=6\bmod 5$  at time $6$. Therefore, adding an upper bound $D_{f,e}=9$ would make the instance infeasible. }
	\label{fig:MinEx}
\end{figure}

Informally, given a directed, strongly connected graph as the underlying static graph as well as a period~$\Delta$ and upper bounds on the travel time between each pair of vertices, the objective is to compute a (single) periodic timestamp for each directed edge such that a fastest temporal path between any two vertices does not exceed the given upper bound. 
The upper bounds between pairs of vertices can be interpreted as a guaranteed quality of service that must be met. The bounds are specified by a  matrix $D$ of integers or $\infty$. The latter means that we do not impose any restriction on the fastest path between the corresponding vertices.
Motivated by  transportation networks with fixed traffic lines, we consider only one global period  instead of allowing different periods for all edges.
Since waiting times are unavoidable and natural, for example, due to transfer times at transit hubs, we assume that there is a small amount of waiting time allowed on all shortest paths and introduce a minimum slack parameter~$k$ that specifies how much waiting time is at least acceptable on all shortest routes. This parameter gives us some leeway to work with, as waiting is necessary even on very small instances, such as the one shown in \cref{fig:MinEx}.

Travelers on a public transport network typically have to change trains along the way. In practice, a minimum transfer time must be planned for each change.
For simplicity, we consider only the case where all transfer times are 0. However, one could easily extend the model by imposing non-trivial minimum transfer times at each stop. More precisely, one could add constraints specifying that the difference between the arrival time at a stop and the departure time on a leaving edge must be at least a given stop-specific constant (the minimum transfer time at this stop).

%%%%%%%%%%%%%%%%%%%%%%%%%%%Related work%%%%%%%%%%%%%%%%%%%%%%
%%%%%%%%%%%%%%%%%%%%%%%%%%%%%%%%%%%%%%%%%%%%%%%%%%%%%%%%%%%%%%
\subparagraph*{Related work.}
Klobas et al.\ show that the \textsc{Simple Periodic TGR} with exact given fastest travel times is \textsc{NP}-hard even for a small constant period $\Delta \geq 3$~\cite[Theorem 3]{TGR_arxive,TGR_sand}.
However, if the underlying static graph is a tree, the problem is solvable in polynomial time~\cite[Theorem 27]{TGR_arxive,TGR_sand}.
It is fixed-parameter tractable (FPT) with respect to the feedback edge number  of the underlying graph but W[1]-hard when parameterized by the feedback vertex number of the underlying graph~\cite[Theorem 29 and Theorem 4]{TGR_arxive,TGR_sand}.

While the \textsc{Simple Periodic Temporal Graph Realization} problem with exact given fastest travel times is solvable in polynomial time on trees, Mertzios et al.\ showed that the problem given upper bounds on the fastest travel times is \textsc{NP}-hard even if the underlying static graph is a tree or even a star ~\cite[Theorem 5]{ubTTR_pre}. 
This still holds for a constant period of $\Delta=2$ and when the input tree $G$ has a constant diameter or a constant maximum degree~\cite[Theorem 5]{ubTTR_pre}. 
However, it is FPT with respect to the number of leaves in the input tree $G$~\cite[Theorem 19]{ubTTR_pre}.

While Klobas et al.\ and Mertzios et al.\ only allow one timestamp per edge, Erlebach et al.\ consider several timestamps per edge~\cite{paraTGR_sand}. 
They examine both the periodic and the non-periodic variant with exact given fastest travel times.
Among other results, they show that the  \textsc{Multi-Label Periodic Temporal Graph Realization} problem is \textsc{NP}-hard, even if the underlying static graph is a star for any number $\ell\geq 5$ of labels per edge.  
All these models have in common that labels (periodic timestamps) are assigned to edges. 

Important related problem versions assign periodic labels to vertices.
The \textsc{Periodic Event Scheduling Problem} (PESP), introduced by Serafini and Ukovich in 1989~\cite{PESP}, is widely used to schedule reoccurring events in public transport. Here the input is a so-called \emph{event-activity network} $N=(V,A)$, a period $\Delta$, and time windows $[\ell_a,u_a]$ for each activity $a\in A$. The set $V$ models \emph{events} (think of an arrival or departure of a vehicle at a stop) and the set $A \subset V \times V$ so-called \emph{activities}. Activities model driving between neighboring stops, dwelling at a stop, transfers between different vehicles or safety constraints (minimal headways). 
In PESP, one seeks a periodic timestamp $\pi_v \in  \{0, 1, \dots, \Delta-1\}$ for each event $v \in V$ such that $(\pi_w - \pi_v) \bmod \Delta \in [\ell_{(v,w)},u_{(v,w)}]$ for all $(v,w) \in A$. 
The time window $[\ell_a,u_a]$ of an activity models lower and upper bounds on the difference of the timestamps between the corresponding events. In other words, the difference $u_a - \ell_a$ bounds the slack (waiting time) which can be introduced on activity $a$. 
In stark contrast to our model, these restrictions are local constraints between adjacent events, not global constraints between arbitrary events as considered in this paper. The PESP is known to be \textsc{NP}-complete for fixed $\Delta \geq 3$ \cite{PESP,Odijk1994}, but efficiently solvable for $\Delta = 2$~\cite[page 87]{Peeters2003}. Deciding the feasibility of a PESP instance is \textsc{NP}-hard even when the treewidth is 2, the branchwidth is 2, or the carvingwidth is 3~\cite{LindnerReisch2022}.

%%%%%%%%%%%%%%%%%%%%%%%%%%%%Our contribution%%%%%%%%%%%%%%%%%%
%%%%%%%%%%%%%%%%%%%%%%%%%%%%%%%%%%%%%%%%%%%%%%%%%%%%%%%%%%%%%%

\begin{figure}[t]
	\begin{minipage}{0.99\textwidth}
		\centering
		\includegraphics[scale=0.85]{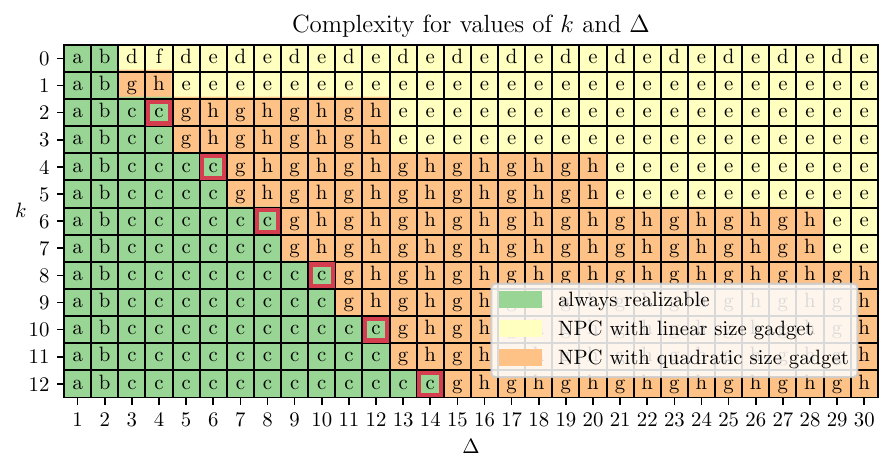}
		\caption{Complexity of the \textsc{Directed Upper-Bounded Periodic Temporal Tree Realization Problem} for different values of the minimum slack parameter $k$ and period $\Delta$ (\textsc{$\Delta$-$k$-DiTTR}).
		The labeling of the boxes indicates which proof was employed to achieve the respective result,
		where \cref{thm:RedToDir} is used for all \textsc{NP}-complete cases: 
			(a)~\cref{thm:Delta1},
			(b)~\cref{thm:Always2},
			(c)~\cref{thm:Always},
			(d)~\cref{thm:GadgetY}, 
			(e)~\cref{thm:GadgetX}, 
			(f)~\cref{thm:Delta4}, 
			(g)~\cref{thm:GadgetKu}, and 
			(h)~\cref{thm:GadgetKg}.
	Cases in which the undirected problem version is NP-complete but all instances of the directed version are realizable are outlined in red (see \cref{thm:RedToDir}).   
	}
		\label{fig:hardness}
	\end{minipage}	
\end{figure}

\subparagraph*{Our contribution.} In this paper, we investigate the complexity of the  \textsc{Directed Upper-Bounded Periodic Temporal Graph Realization Problem} (\textsc{DiTGR}) and the \textsc{Directed Upper-Bounded Periodic Temporal Tree Realization Problem} (\textsc{DiTTR}).
Our main results are as follows:
\begin{itemize}
	\item We provide an efficiently checkable necessary and sufficient condition for feasibility in DiTTR when all pairs of vertices must be realized on shortest paths without waiting time. The basic insight is that the solvability for such instances depends on the distance between branching vertices of the given graph topology.
	\item We then introduce the parameter $k$, which specifies the minimum waiting time to be allowed on each shortest path (the \emph{slack}), i.e., for all pairs of vertices $u$ and $v$ we require that the duration bound $D_{u,v}$ is at least the distance of $u$ and $v$ in the underlying static graph plus the constant $k$.
	We fully characterize the complexity of the problem for bidirected tree topologies in terms of period $\Delta$ and  slack parameter $k$. For each possible combination of parameters $\Delta$ and $k$, we either prove that the problem is easily solvable or hard, see \cref{fig:hardness}.
	\item 
For $\Delta \geq 3$, we give a simpler proof that the undirected version of the \textsc{Upper-Bounded Periodic Temporal Tree Realization} problem is \textsc{NP}-complete, even if the underlying static graph is a star and thus has a constant diameter\footnote{Independently of us, the authors of \cite{ubTTR_pre} included essentially the same result in version 2 of their arxiv paper, while it was raised as an open question in version 1.}.   
	 \item 
	 We also investigate in more detail the special case of a given period $\Delta = 2$. This turns out to be \textsc{NP}-complete in general (for both directed and undirected\footnotemark[\value{footnote}] graphs), 
but can be solved efficiently on directed bipartite graphs and graphs in which all shortest paths are unique. 
\end{itemize}

%%%%%%%%%%%%%%%%%%%%%%%%%%%%structure of our work%%%%%%%%%%%%%
%%%%%%%%%%%%%%%%%%%%%%%%%%%%%%%%%%%%%%%%%%%%%%%%%%%%%%%%%%%%%%

\subparagraph*{Organization of the paper.}
In Section~\ref{sec:problem}, we start with a formal problem definition. Then, in Section~\ref{sec:feasible} we will first characterize feasible instances by providing a necessary and sufficient condition and then present all cases where instances with a bidirected tree topology can be easily solved. 
We will also discuss in more detail the special case of a given period $\Delta = 2$. 
In Section~\ref{sec:hardness}, we give our hardness results for the remaining instances with a directed bidirected tree topology. Finally, we conclude in Section~\ref{sec:conclusions} with a summary and suggestions for future work.

%%%%%%%%%%%%%%%%%%%%%%%%%%%%Prelim%%%%%%%%%%%%%%%%%%%%%%%%%%%
%%%%%%%%%%%%%%%%%%%%%%%%%%%%%%%%%%%%%%%%%%%%%%%%%%%%%%%%%%%%%%
\section{Formal Problem Definition}\label{sec:problem}
As we mainly consider the problem for directed graphs, all following definitions deal with directed temporal graphs. 
Both undirected edges and directed arcs are referred to as \emph{edges}.
The definitions for undirected temporal graphs are analogous.
For temporal graphs and periodic temporal graphs, we follow the notation of Klobas et al.~\cite{TGR_arxive,TGR_sand}:
\begin{definition}\label{def:tempGraph}
	A \emph{temporal graph} is a pair $(G,\Lambda)$, where $G=(V,E)$ is the underlying (static) graph and $\Lambda: E \rightarrow 2^{\mathbb{N}_0}$ is a function, that assigns a set of discrete timestamps to each edge. 
\end{definition}
\begin{definition}\label{def:periodictempGraph}
	A \emph{$\Delta$-periodic} temporal graph is a triple $(G=(V,E),\lambda: E \to \{0,1,\ldots,\Delta-1\},\Delta)$ which denotes the temporal graph $(G, \Lambda)$ where $\forall e \in E: \Lambda(e)=\{\lambda(e)+i\cdot \Delta\mid i \in \mathbb{N}_0\}$. 
\end{definition}
Informally, a temporal path is a sequence that denotes consecutive edges on a path in the underlying static graph and the times at which they are traversed. 
No vertex can be visited more than once. 
	Recent literature distinguishes between the strict and non-strict version. Throughout the paper we only consider strict paths:
The timestamps have to be strictly increasing.
Formally, we can define a temporal path as follows:
\begin{definition}\label{def:tempPath}
	A \emph{temporal $s$-$z$-path} of length $\ell$ in a directed temporal graph $(G,\Lambda)$ is a sequence $P=(v_{i-1},v_i,t_i)_{i=1}^\ell $ for which the following holds:
	\begin{itemize}
		\item $v_0=s \land v_{\ell}=z$
		\item $\forall i,j \in \{0,\mathellipsis, \ell\}, i\neq j: v_i \neq v_j$
		\item $\forall i \in \{1,\mathellipsis, \ell\}: (v_{i-1},v_{i})\in E $
		\item $\forall i \in \{1,\mathellipsis, \ell\}: t_i \in \Lambda((v_{i-1},v_{i}))$
		\item $\forall i \in \{2,\mathellipsis, \ell\}: t_{i-1} < t_{i}$
	\end{itemize}
\end{definition}

The traversal of an edge requires one time unit. The temporal $s$-$z$-path \emph{starts} or \emph{begins} at vertex $s$ at time $t_1$, it \emph{reaches} or \emph{arrives} at vertex $z$ at time $t_\ell+1$.

\begin{definition}\label{def:tempPathDur}
	The \emph{duration} $d(P)$ of a temporal path $P=(v_{i-1},v_i,t_i)_{i=1}^\ell $ is defined as $d(P)=t_\ell+1-t_1$.
\end{definition}

Let $\hat{d}(u,v)$ be the static \emph{distance} of $u$ and $v$ in the underlying static graph.
The undirected and directed problem versions can now be stated as follows.

\problemdef{\textsc{Periodic Upper-Bounded Temporal Graph Realization (TGR)}}
{An undirected, connected graph $G=(V,E)$ with $V=\{v_1,v_2,\ldots,v_n\}$, an $n \times n$  matrix $D$ of positive integers where $\forall u,v\in V: D_{u,v}\geq \hat{d}(u,v)$, and a positive integer $\Delta$.}
{Does there exist a $\Delta$-periodic labeling $\lambda: E \rightarrow \{0,1,\ldots,\Delta-1\}$ such that, 
	for every~$i,j$, the duration of a fastest temporal path from $v_i$ to $v_j$ in the $\Delta$-periodic temporal graph $(G,\lambda,\Delta)$ is \emph{at most} $D_{i,j}$?}

The restriction of \textsc{TGR} where the given graph is a tree is called \textsc{TTR}. To generalize the undirected problem version to the directed one, we restrict the considered graphs to the simplest case: we only consider directed graphs obtained by replacing each undirected edge  with two antiparallel directed edges. We call graphs that are created this way bidirected.

\problemdef{\textsc{Periodic Upper-Bounded Temporal Directed Graph Realization (DiTGR)}}
{A directed, strongly connected graph $G=(V,E)$ with $V=\{v_1,v_2,\ldots,v_n\}$, an $n \times n$ matrix $D$ of positive integers where $\forall u,v\in V: D_{u,v}\geq \hat{d}(u,v)$, and a positive integer $\Delta$.}
{Does there exist a $\Delta$-periodic labeling $\lambda: E \rightarrow \{0,1,\ldots,\Delta-1\}$ such that, 
	for every~$i,j$, the duration of a fastest temporal path from $v_i$ to $v_j$ in the $\Delta$-periodic temporal graph $(G,\lambda,\Delta)$ is \emph{at most} $D_{i,j}$?}

 The restriction of \textsc{DiTGR} to inputs of such graphs derived from trees by adding two directed edges $(u,v)$ and $(v,u)$ for every undirected edge $\{u,v\}$ is called \textsc{DiTTR}.
 This means that for any pair of vertices $(u,v)$ there is exactly one path from $u$ to $v$ in the underlying static graph. 
Furthermore, we only consider instances where $\forall u,v\in V: D_{u,v}\geq \hat{d}(u,v)$, because all other instances cannot be realized anyway. 
The \emph{duration} of a fastest temporal path from $u$ to $v$ depending on $\lambda$ is denoted by $d_\lambda(u,v)$.
We simply write $d(u,v)$ whenever $\lambda$ is clear from the context.  
For brevity, we write $\lambda(u,v)$ instead of $\lambda((u,v))$.
The \emph{waiting time} at vertex $v_i$ on a path $P=(v_{i-1},v_i,t_i)_{i=1}^\ell$  is $t_{i+1}-t_i -1$ for $1 \leq  i <\ell$.
The waiting time on a path $P$ is the sum of the waiting time at all its vertices.
In a bidirected tree it is equal to the difference $d(v_0,v_\ell)-\hat{d}(v_0,v_\ell)$ on a fastest path.
In \cref{fig:MinEx2}, the static distance $\hat{d}(f,e)$ between the vertices $f$ and $e$ is 9, whereas the duration of the fastest temporal path between them is $d(f,e)=11$, with a waiting time of $2=6-3-1$ only at vertex $d$.
For a vertex $v$, we denote by $\delta^+(v)$ its static outdegree, by $\delta^-(v)$ its static indegree, and by $\delta(v) = \delta^+(v) + \delta^-(v)$ its (total) static degree; $N(v)$ refers to the set of its neighbors.

For any pair of vertices $(v,w)$ the duration of a fastest temporal path $P=(v_{i-1},v_i,t_i)_{i=1}^\ell$ can be at most $d(P)\leq (\hat{d}(v,w)-1)\cdot \Delta +1$.
This bound is achieved if $\lambda(v_{i-1},v_i)$ is equal for all $i \in \{1, \mathellipsis, \ell\}$.
Therefore, any value of $D_{v,w}$ with $D_{v,w}\geq (\hat{d}(v,w)-1)\cdot \Delta+1$ is no real restriction.
We write any such value simply as  $D_{v,w}=\infty$ or omit it entirely.

The slack parameter $k$ is an implicit parameter of $D$: $D$ is restricted by $\forall v,w \in V: D_{v,w}\geq \hat{d}(v,w)+k$. For trees, this means that $k$ is the minimum permitted waiting time on any path, since paths are unique. We call the versions of \textsc{TTR} and \textsc{DiTTR} where the parameters  $\Delta$ and $k$ are fixed \textsc{$\Delta$-$k$-TTR} respectively \textsc{$\Delta$-$k$-DiTTR}.

%----------------------------------------------
\section{Characterization of Feasible Instances}\label{sec:feasible}

\subsection{A Necessary and Sufficient Condition for Feasibility in \textsc{DiTTR}}
Before we look at the hardness results for the \textsc{DiTTR} problem, let us consider a special case: There is no waiting time allowed on any shortest path. As the underlying static graph is a tree, all paths are shortest paths and no waiting is allowed on any path. Therefore, the given travel times are not just upper bounds but exact values. Since there is only one path between every pair of vertices, this means $\forall u,v \in V: D_{u,v}=\hat{d}(u,v)$.
 For undirected graphs, this special case is also covered by the result of Klobas et al.\ for the exact \textsc{TTR}: it is decidable in polynomial time whether the instance is realizable~\cite[Theorem 27]{TGR_arxive,TGR_sand}.
We provide a simple characterization of the realizability of an instance of \textsc{DiTTR}.  
To do this, we introduce so-called branching vertices. We will call a vertex with a degree of at least 6 \emph{branching vertex}.

\begin{observation}\label{obs:exact}
	An instance of \textsc{DiTTR} with $\forall u,v \in V: D_{u,v}=\hat{d}(u,v)$ is realizable exactly when the distance of any pair of branching vertices is a multiple of $\Delta/2$.
\end{observation}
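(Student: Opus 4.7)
My plan is to prove the ``exactly when'' in two directions, leveraging the key observation that the constraint $D_{u,v} = \hat{d}(u,v)$ forces every fastest $u$-to-$v$-path to have duration exactly $\hat{d}(u,v)$, which is equivalent to requiring that the labels along every directed path form a sequence consecutive modulo $\Delta$.

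For the necessary direction, I would fix a branching vertex $b$ with three neighbors $a, c, d$. The six no-waiting constraints for the two-step paths among $\{a,c,d\}$ through $b$ collapse into three facts: all outgoing labels at $b$ share a common value $\beta_b \pmod \Delta$, all incoming labels share a common value $\alpha_b \pmod \Delta$, and $\beta_b \equiv \alpha_b + 1$. For two branching vertices $b_1, b_2$ at distance $m$, the no-waiting requirement on the $b_1$-to-$b_2$-path then yields $\alpha_{b_2} \equiv \alpha_{b_1} + m \pmod \Delta$, and the reverse path yields the symmetric relation $\alpha_{b_1} \equiv \alpha_{b_2} + m \pmod \Delta$; combining the two forces $2m \equiv 0 \pmod \Delta$, i.e., $m$ is a multiple of $\Delta/2$.

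For the sufficient direction, I would root the tree at any branching vertex $r$ (if no branching vertex exists, the tree is a path and the claim is trivial). For every non-root $u$ with parent $p(u)$, I would define
\[
\lambda(p(u), u) \equiv \hat{d}(r,u) \pmod \Delta \quad \text{and} \quad \lambda(u, p(u)) \equiv 1 - \hat{d}(r,u) \pmod \Delta.
\]
By construction the labels are consecutive along every monotone (up-only or down-only) path. For a turning path with lowest common ancestor $a$, a short computation reduces the no-waiting condition at $a$ to $2\hat{d}(r,a) \equiv 0 \pmod \Delta$; this is trivial when $a = r$, and otherwise $a$ has at least two children plus a parent, so $\delta(a) \geq 6$, hence $a$ is branching and the hypothesis supplies exactly the needed congruence.

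The step I expect to require most care is the sufficiency verification. The computation at the turning vertex is short, but the argument hinges on two design choices: picking the labeling so that the constraint at the turn reduces to a single congruence involving only $\hat{d}(r,a)$, and recognizing that any turning vertex distinct from the root is automatically branching, so that the hypothesis (stated only for branching vertices) is strong enough to cover it. Once these observations are in place, both parities of $\Delta$ are handled uniformly, with the congruence $2m \equiv 0 \pmod \Delta$ automatically forcing $\Delta \mid m$ when $\Delta$ is odd.
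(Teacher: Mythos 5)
Your proposal is correct, and for the necessity direction it is essentially the paper's argument: you derive that all incoming (resp.\ outgoing) labels at a branching vertex coincide with the outgoing labels one larger, then play the forward and backward path between two branching vertices against each other to get $2m\equiv 0\pmod{\Delta}$; this matches the paper's proof of \cref{obs:exact} step for step. Where you genuinely go beyond the written proof is sufficiency: the paper's proof of \cref{obs:exact} establishes only the ``only if'' half and leaves the converse implicit, whereas you make it explicit by rooting at a branching vertex, using the labeling $\lambda(p(u),u)\equiv\hat{d}(r,u)$, $\lambda(u,p(u))\equiv 1-\hat{d}(r,u)$ (a shift of the paper's Algorithm~\ref{lis:RealAlg}), and observing that the only possible waiting, $2\hat{d}(r,a)\bmod\Delta$ at a turning vertex $a$, vanishes because every proper turning vertex other than the root is itself branching and hence at distance a multiple of $\Delta/2$ from $r$. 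That last observation --- that the hypothesis, stated only for pairs of branching vertices, automatically covers every vertex where waiting could occur --- is exactly the point that needs care, and you handle it correctly, including the degenerate cases (tree is a path; turning vertex equals the root). So your write-up is not just an alternative route but a more complete proof of the stated equivalence.
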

\begin{proof}
First we observe that a vertex with a static degree of at least 6 corresponds to a vertex with a static degree of at least 3 in the underlying undirected tree. 
As can easily be seen from the following argument, all branching vertices have fixed arrival and departure times, i.e.\ all incoming edges of a branching vertex have the same label, as well as all outgoing edges.
Let $v$ be a branching vertex and let for some neighbor $w$ of $v$ w.l.o.g. $\lambda(w,v)=\Delta-1$. 
For all outgoing edges $(v,x)$ of $v$ excluding $(v,w)$ this means $\lambda(v,x)=0$ as no waiting is allowed.
This in turn requires $\lambda(x, v)=\Delta-1$ for the remaining incoming edges $(x, v)$. The latter implies $\lambda(v,w)=0$.
Let $y$ and $z$ be two branching vertices with distance $\hat{d}(y,z)$ and let $t(y)$, $t(z)$  be the timestamps of their incoming edges.
Then the timestamps of their outgoing edges have to be $(t(y) +1) \bmod \Delta$ and $(t(z)+1)\bmod \Delta$, respectively.
Let $P=(y=v_0,v_1,t_1)\mathellipsis(v_{\ell-1},z=v_\ell,t_\ell)$ be a fastest temporal $y$-$z$-path and its length $\ell$. 
By \cref{def:tempPathDur}, the duration of $P$ is $d(P)=t_\ell-t_1+1$. We note that for any feasible solution $d(P)=d(y,z) = \hat{d}(y,z)$. 
To not exceed $D_{y,z}=\hat{d}(y,z)$ the following must hold: 
$\exists i\in \mathbb{N}_{0}: i \cdot \Delta+ t(z) = t_\ell=d(P)+t_1-1=\hat{d}(y,z) +(t(y)+1)  -1$.
By definition and due to symmetry, this means:
$t(z) \equiv\hat{d}(y,z)+ t(y) \pmod{\Delta}$ and $t(y) \equiv \hat{d}(y,z)+t(z) \pmod{\Delta}$.
Therefore, the following must also apply: $0 \equiv 2 \cdot \hat{d}(y,z) \pmod{\Delta}$.

If the distance of any pair of branching vertices is a multiple of $\Delta/2$, we can start at an arbitrary branching vertex r and set its arrival time to $\Delta-1$. 
Without waiting, the departure time of this vertex is $0$. 
Then we can assign labels iteratively as enforced by already labeled adjacent edges as specified in Algorithm~\ref{lis:RealAlg}. (However, we emphasize that we have to choose the root as a branching vertex here.)
As all branching vertices have a distance of a multiple of $\Delta/2$, this procedure assigns them fixed arrival and departure times with no waiting.
Hence, with such a labeling there is a temporal path between every pair of vertices without waiting.
If there are no branching vertices at all, we can assign increasing labels independently for both directions of the path.
\end{proof}

This necessary and sufficient condition implies immediately a linear time algorithm for these instances. This does not contradict our observation that the problem becomes NP-complete if waiting times are allowed, i.e., $D_{u,v}\geq\hat{d}(u,v)$. This is even true for the slack parameter $k=0$, since $k$ is a only lower bound for the allowed waiting time.

\subsection{Efficiently Solvable Cases}

In this section we demonstrate how to realize all instances with $\Delta \leq k+1$ for odd $\Delta$ and with $\Delta \leq k+2$ for even $\Delta$. 
Just for completeness, we start with the trivial case $\Delta = 1$.

\begin{theorem}\label{thm:Delta1}
	For $\Delta=1$ all instances of \textsc{TGR} and \textsc{DiTGR} are feasible.
\end{theorem}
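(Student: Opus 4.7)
The plan is to observe that when $\Delta = 1$, the codomain $\{0, 1, \ldots, \Delta-1\}$ collapses to the single value $\{0\}$, so there is essentially no labeling choice: every edge $e$ must receive $\lambda(e) = 0$, and by the definition of a $\Delta$-periodic temporal graph we then have $\Lambda(e) = \{0, 1, 2, \ldots\} = \mathbb{N}_0$. Thus every edge is available at every timestamp.

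Given this, I would argue that a fastest temporal path from $u$ to $v$ has duration exactly $\hat{d}(u,v)$. For the upper bound, take any static $u$-$v$-path $v_0 = u, v_1, \ldots, v_\ell = v$ of length $\ell = \hat{d}(u,v)$ and assign $t_i = i - 1$ for $i = 1, \ldots, \ell$; the conditions of \cref{def:tempPath} are all trivially satisfied since $t_i \in \mathbb{N}_0 = \Lambda((v_{i-1},v_i))$ and the $t_i$ are strictly increasing. The duration of this path is $t_\ell - t_1 + 1 = \ell = \hat{d}(u,v)$. For the lower bound, any temporal path of length $\ell'$ has duration at least $\ell'$ (strictly increasing integer timestamps), and $\ell' \geq \hat{d}(u,v)$ since the underlying sequence is a static path.

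Hence for every pair $(u,v)$ the fastest temporal path has duration $\hat{d}(u,v) \leq D_{u,v}$, where the inequality is the standing assumption on the input matrix $D$. This establishes feasibility for the unique labeling $\lambda \equiv 0$, and the argument is identical for the undirected \textsc{TGR} and the directed \textsc{DiTGR} since nowhere did we use orientation of edges beyond the existence of a static path witnessing $\hat{d}(u,v)$. There is no real obstacle here; the only thing to be careful about is to invoke the input constraint $D_{u,v} \geq \hat{d}(u,v)$ explicitly so the proof is self-contained.
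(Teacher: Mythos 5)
Your proof is correct and follows essentially the same approach as the paper's: note that $\Delta=1$ forces $\lambda\equiv 0$, so every edge is available at every time step, and every shortest static path traversed with consecutive timestamps realizes duration $\hat{d}(u,v)\le D_{u,v}$. You merely spell out the upper- and lower-bound argument for the fastest-path duration more explicitly than the paper does.
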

\begin{proof}
Obviously, all labels are forced to have the same value $\lambda = 0$. Since the period $\Delta$ is~1, there is no waiting time at all at any vertex. So every shortest path in the underlying static graph can be realized as a fastest temporal path with duration equal to its length, which means that all instances are realizable. 
\end{proof}

\begin{lstlisting}[mathescape=true,caption={Realizing instances that can always be realized}, label=lis:RealAlg, captionpos=t, abovecaptionskip=-\medskipamount, float]
Choose an arbitrary vertex $r$ as root. 
For each edge $e=(a,b)$, do:
	if $e$ points away from the root, 
		set $\smash{\lambda(a,b)=\hat{d}(r,a) \bmod \Delta}$
	if $e$ points to the root, 
		set $\smash{\lambda(a,b)=(\Delta -\hat{d}(r,a)) \bmod \Delta}$
\end{lstlisting}
\begin{figure}[t]
	\begin{subfigure}[b]{0.33\textwidth}
		\centering
		\includegraphics[scale=0.7]{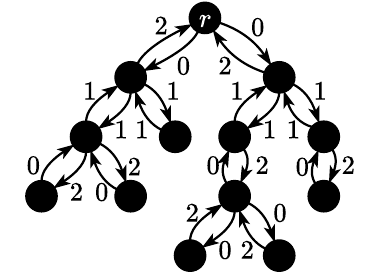}
		\caption{Realization for $k+1\geq \Delta = 3$}
		\label{fig:sol1}
	\end{subfigure}
	\hfill
	\begin{subfigure}[b]{0.33\textwidth}
		\centering
		\includegraphics[scale=0.7]{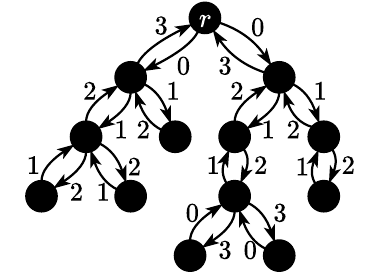}
		\caption{Realization for $k+2 \geq \Delta = 4$ }
		\label{fig:sol2}
	\end{subfigure}
	\caption{Example: Two realizations with waiting times at most $\Delta-1$ respectively $\Delta-2$ as constructed by Algorithm \ref{lis:RealAlg}}
	\label{fig:sols}
\end{figure}

\begin{theorem}\label{thm:Always}
	All instances of \textsc{DiTTR} with $\Delta \leq k+1$ for odd $\Delta$ and with $\Delta \leq k+2$ for even $\Delta$ are feasible.
\end{theorem}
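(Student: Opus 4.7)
My plan is to use the explicit construction given in Algorithm~\ref{lis:RealAlg} and show that, under the hypotheses on $\Delta$ and $k$, the labeling it produces realizes every entry of $D$. The labeling orients time ``outward'' from the chosen root $r$: an away-from-root edge $(a,b)$ gets label $\hat d(r,a) \bmod \Delta$, and a toward-root edge $(a,b)$ gets label $(\Delta - \hat d(r,a)) \bmod \Delta$. Since the instance is \textsc{DiTTR}, between any two vertices $u,v$ there is a unique static path, which passes through their lowest common ancestor $w$ with respect to $r$; the temporal path is forced to ascend from $u$ to $w$ and then descend from $w$ to $v$. So the analysis reduces to tracking the waiting time along this single path, for all ordered pairs $(u,v)$.

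\textbf{Key calculation.} Let $d_x := \hat d(r,x)$ for any vertex $x$. I would first verify that on the ascending segment $u = v_0, v_1, \ldots, v_{d_u - d_w} = w$, the label of the edge $(v_i,v_{i+1})$ is $(\Delta - d_u + i) \bmod \Delta$, so consecutive labels differ by exactly $1 \pmod \Delta$; thus an ascent that leaves $u$ at a time congruent to $(\Delta - d_u)$ incurs no waiting anywhere, and reaches $w$ exactly at a time congruent to $-d_w \pmod \Delta$. By a symmetric computation the descending segment $w = w_0, w_1, \ldots, w_{d_v - d_w} = v$ uses labels $(d_w + j) \bmod \Delta$ and also incurs no waiting once it starts. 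The only waiting occurs at $w$, where the arrival is congruent to $-d_w$ while the first departure label is $d_w \pmod \Delta$, giving a wait of exactly $(2d_w) \bmod \Delta$. Hence the total duration equals $\hat d(u,v) + \bigl((2d_w) \bmod \Delta\bigr)$, independently of the starting time, and the degenerate cases $w=u$ or $w=v$ simply make this extra term vanish and reduce the path to a pure ascent/descent.

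\textbf{Finishing the cases.} It remains to bound $(2d_w) \bmod \Delta$ by $k$, since the feasibility condition $\forall u,v:\; D_{u,v} \geq \hat d(u,v) + k$ is then automatically satisfied. If $\Delta$ is odd, then $\gcd(2,\Delta)=1$, so in the worst case $(2d_w) \bmod \Delta$ can be as large as $\Delta - 1$; the hypothesis $\Delta \leq k+1$ gives $\Delta - 1 \leq k$ and we are done. If $\Delta$ is even, then $(2d_w) \bmod \Delta$ is always even and therefore at most $\Delta - 2$; the hypothesis $\Delta \leq k+2$ yields $\Delta - 2 \leq k$, again completing the bound. Combining with the fact that the ascending and descending segments introduce no additional waiting, every required upper bound is met.

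\textbf{Expected obstacle.} The only delicate point I foresee is convincing the reader that no cleverer temporal path can do better or be forced to do worse: in a tree the underlying static path is unique, but one must still rule out that ``starting later'' could shift the congruence class at $w$ and remove the wait. The observation that the label on the first ascending edge is fixed modulo $\Delta$, and hence the arrival time at $w$ modulo $\Delta$ is fixed regardless of the starting moment, closes this gap. Once this invariant is stated cleanly, the rest is the routine modular arithmetic above.
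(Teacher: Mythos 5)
Your proposal is correct and follows essentially the same route as the paper: it analyzes the labeling produced by Algorithm~\ref{lis:RealAlg}, observes that waiting can occur only at the turning vertex of the unique static path, and bounds that wait by $\Delta-1$ for odd $\Delta$ and by $\Delta-2$ for even $\Delta$ via parity. Your version merely makes the paper's argument more explicit by computing the wait at the lowest common ancestor $w$ as exactly $(2\hat d(r,w)) \bmod \Delta$, which is a fine (and slightly sharper) way to present the same proof.
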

\begin{proof}
All instances with $\Delta \leq k+1$ for odd $\Delta$ and with $\Delta \leq k+2$ for even $\Delta$ can be realized by a simple algorithm detailed in Algorithm \ref{lis:RealAlg}.
First, choose an arbitrary vertex $r$ as the root. 
Then, traverse the tree and, depending on its distance from the root and whether it points to or away from the root, assign a label to each edge as specified in Algorithm~\ref{lis:RealAlg}.
Two realizations computed by this algorithm can be seen in  \cref{fig:sols}.
	There are three cases for the direction of a path between two vertices: 
	\begin{enumerate}
		\item The path runs entirely towards the root.
		\item The path leads  strictly away from the root.
		\item The path first heads towards the root and then changes direction away from it.
	\end{enumerate}
	In the first two cases there is no delay at all. 
	For the last case, it is easy to see that waiting time can only occur when changing direction and therefore only once per path.
	Thus, the maximum waiting time is at most $\Delta-1$ on any path.
	If $\Delta$ is even, the waiting time is at most $\Delta-2$, because by construction  incoming and outgoing edges of every vertex are assigned values of different parity. Hence, there cannot be two identical labels in a row on a path.
\end{proof}

\begin{corollary}\label{thm:Always2}
	All instances of \textsc{DiTTR} with $\Delta =2$ are feasible.
\end{corollary}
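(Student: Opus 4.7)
The plan is to derive the corollary as an immediate consequence of \cref{thm:Always}. Since $\Delta = 2$ is even, the applicable hypothesis is $\Delta \leq k+2$, which rearranges to $k \geq 0$. Because the slack parameter $k$ is implicitly defined by the inequality $D_{v,w} \geq \hat{d}(v,w) + k$ together with the basic admissibility requirement $D_{u,v} \geq \hat{d}(u,v)$ from the problem definition, the value $k = 0$ is always a valid choice, so $k \geq 0$ holds for every instance. Hence the hypothesis of \cref{thm:Always} is satisfied automatically for $\Delta = 2$, and the conclusion transfers directly.

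I would therefore write the proof in a single sentence: plug $\Delta = 2$ (even) into \cref{thm:Always}, observe that the required inequality $2 \leq k+2$ is trivially true, and invoke the theorem to obtain realizability. No additional construction is needed, since Algorithm \ref{lis:RealAlg} from the proof of \cref{thm:Always} already produces a valid labeling in this parameter regime; one could optionally remark that for $\Delta = 2$ the algorithm simply alternates labels $0$ and $1$ along any root-to-leaf path, ensuring that on any directional switch at most zero waiting time is incurred (since, by the parity argument inside the proof of \cref{thm:Always}, consecutive edges on a fastest path have labels of different parity).

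The only conceivable subtlety would be confirming that $k$ can indeed be taken to be $0$ for every DiTTR input; this is not really an obstacle because the problem statement explicitly allows $D_{u,v} = \hat{d}(u,v)$, which corresponds exactly to slack $k = 0$. Thus there is no real hard step, and the corollary is a clean specialization of the even-period case of \cref{thm:Always}.
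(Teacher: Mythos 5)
Your proposal is correct and matches the paper's own argument: the paper likewise derives the corollary from \cref{thm:Always} (via Algorithm~\ref{lis:RealAlg}), noting that for even $\Delta=2$ the construction yields waiting time at most $\Delta-2=0$, which is exactly the $k=0$ specialization you describe. Your observation that $k=0$ is admissible for every instance because the problem definition already requires $D_{u,v}\geq\hat{d}(u,v)$ correctly handles the only subtle point.
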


This follows from \cref{thm:Always} with $\Delta=2$ and $k=0$ and can easily be generalized to directed bipartite graphs.

\begin{corollary}\label{thm:Always3}
	All instances of \textsc{DiTGR} where the input is restricted to a directed bipartite graph $G=(U,V,E)$ with $E \subseteq (U\times V) \cup (V \times U)$ and period $\Delta =2$ are feasible.
\end{corollary}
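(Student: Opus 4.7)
The plan is to exhibit an explicit labeling and then show it works directly, rather than adapting the root-based algorithm used for trees. Concretely, I would define
\[
\lambda(e) = \begin{cases} 0 & \text{if } e \in U \times V, \\ 1 & \text{if } e \in V \times U. \end{cases}
\]
This is well-defined because the assumption $E \subseteq (U\times V) \cup (V \times U)$ partitions $E$ into ``forward'' and ``backward'' edges with respect to the bipartition.

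The key observation I would then verify is that, because every edge crosses the bipartition, every directed walk in $G$ is forced to alternate between vertices of $U$ and vertices of $V$, and therefore the labels along the walk alternate between $0$ and $1$. I would make this precise by induction on path length: starting from $t_1 = \lambda(e_1)$ on the first edge, the inductive step shows that the greedy next departure time is $t_i = t_{i-1} + 1$, since the label of $e_i$ has the opposite parity of $t_{i-1}$ modulo $\Delta = 2$ and hence $t_{i-1}+1$ is itself a valid timestamp of $e_i$ satisfying the strict inequality $t_i > t_{i-1}$. Consequently, any directed path of length $\ell$ admits a temporal realization of duration exactly $\ell$.

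To conclude, I would note that for every ordered pair $(u,v)$ of vertices the quantity $\ell$ above is at most the static distance $\hat{d}(u,v)$ along any chosen directed $u$-$v$-path, so the fastest temporal $u$-$v$-path has duration at most $\hat{d}(u,v) \leq D_{u,v}$ by the instance assumption $D_{u,v} \geq \hat{d}(u,v)$. This gives feasibility of the labeling.

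I expect no real obstacle here; the main subtlety is simply that, unlike in the tree case of \cref{thm:Always2}, the algorithm does not depend on a root and on distances from it, so one has to argue on the bipartition directly. Nothing in the construction uses acyclicity, so the proof covers arbitrary strongly connected bipartite digraphs, which is exactly the scope of the statement.
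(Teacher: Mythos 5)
Your proposal is correct and uses exactly the same construction as the paper: label all $U\times V$ edges with $0$ and all $V\times U$ edges with $1$, then observe that every path alternates between $U$ and $V$ so there is no waiting time. You merely spell out the alternation/induction step that the paper leaves implicit.
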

\begin{proof}
Given any directed, bipartite graph $G=(U,V,E)$ with $E \subseteq (U\times V) \cup (V \times U)$ and $\Delta =2$, set $\lambda(e_0)=0$ for all $e_0 \in E \cap (U\times V)$ and $\lambda(e_1)=1$ for all $e_1\in E \cap (V\times U)$.
Since every path in $G$ alternates between vertices in $U$ and vertices in $V$, there is no waiting time.
\end{proof}

A graph where all shortest paths are unique is called \emph{geodetic}~\cite[p. 105]{GBV-196749549} or \emph{min-unique}~\cite{RA00}.
\begin{restatable}{thm}{cycles}\label{thm:OddCycle}
    Instances of \textsc{DiTGR} with period $\Delta =2$ which are restricted to a directed geodetic graph $G=(V,E)$ with  $D_{u,v}\in\{\hat{d}(u,v),\infty\}$ for all $u,v \in V$ can be decided in polynomial time.
\end{restatable}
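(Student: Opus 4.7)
The plan is to reduce the instance to a $2$-colorability problem on an auxiliary graph, decidable in linear time.

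The first step is to unpack what ``no waiting time'' means when $\Delta = 2$. Since every label lies in $\{0,1\}$, for two consecutive edges $e, e'$ on a temporal path the strict-increase condition $t' > t$ forces $\lambda(e') \neq \lambda(e)$ exactly when no waiting is introduced at the shared endpoint, and any waiting contributes at least $\Delta = 2$ units to the duration. Consequently, a path of static length $\ell$ has temporal duration exactly $\ell$ if and only if its labels strictly alternate, and otherwise has duration at least $\ell + 2$.

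The second step is to establish uniqueness of shortest paths in $G$. Every $u$--$v$ path decomposes into a tree segment from $u$ up to some cycle vertex $c_i$, a segment along the cycle from $c_i$ to some $c_j$, and a tree segment from $c_j$ down to $v$ (with degenerate cases when $u$ or $v$ is itself on the cycle, or when $c_i=c_j$). The tree segments are unique because trees have unique paths, and the cycle segment is unique because the cycle has odd length, so the two arcs around it always have different length. Therefore every non-shortest $u$--$v$ path is strictly longer than $\hat{d}(u,v)$, and in view of the first step it is then also temporally strictly longer than $\hat{d}(u,v)$.

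Combining the two observations, a constraint $D_{u,v} = \hat{d}(u,v)$ is equivalent to requiring that the unique shortest path $P_{u,v}$ carry strictly alternating labels, i.e.\ $\lambda(e) \neq \lambda(e')$ for every pair of consecutive directed edges $(e,e')$ along $P_{u,v}$; constraints with $D_{u,v} = \infty$ impose nothing. I would then build an auxiliary graph $H$ whose vertices are the directed edges of $G$ and whose edges join every pair of directed edges forced to differ by some such constraint. The instance is realizable if and only if $H$ is bipartite, and any proper $2$-coloring of $H$ directly yields a valid labeling~$\lambda$. Since there are $O(n^2)$ constrained pairs and each contributes $O(n)$ edges to $H$, the construction and the subsequent BFS-based bipartiteness test run in polynomial time.

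The main obstacle I anticipate is the careful proof of uniqueness of shortest paths: this is exactly where the odd-cycle hypothesis enters in an essential way, since an even cycle would destroy uniqueness and decouple the local ``alternation'' constraints from the global cycle parity. Once uniqueness is secured, the reduction to bipartiteness is straightforward.
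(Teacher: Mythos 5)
Your proposal is correct and takes essentially the same approach as the paper: reduce to $2$-colorability of an auxiliary graph whose vertices are the directed edges of $G$, with an edge between each pair of consecutive edges on a zero-slack shortest path, and test bipartiteness. (You additionally spell out the uniqueness of shortest paths via the odd-cycle argument, which the paper only asserts; the only slip is that with $\Delta=2$ a single wait adds $1$, not $2$, to the duration --- which still suffices, since $\ell+1>\ell=D_{u,v}$.)
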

\begin{proof}
	We show this by giving a reduction to \textsc{2--Vertex-Coloring} as shown in \cref{fig:1Tree}.
	Note that as the graph is geodetic there is by definition a unique shortest path in the underlying static graph for every pair of vertices $(v,w)$~\cite[p. 105]{GBV-196749549}.
	Given an instance  $(G=(V,E), D, \Delta)$, let $\mathtt{P}\subseteq 2^E$ be the set of all edge sets of paths in $G$ and let $\mathtt{P_e}\subseteq\mathtt{P}\setminus\{\emptyset\}$ be the set of edge sets of all shortest paths where no waiting time is allowed. 
	We construct an undirected auxiliary graph $G'=(E,E')$ with 
	$E'=\{\{(x,y),(y,z)\} \mid \exists P \in \mathtt{P_e}: (x,y),(y,z) \in P\}$.
	This means that the labeling has to be chosen such that $\lambda(x,y)=(1-\lambda(y,z)) \bmod \Delta$ for any $\{(x,y),(y,z)\} \in E'$ since waiting at $y$ on the way from $x$ to $z$ is not allowed.
	As $\Delta=2$, this is true if and only if $\lambda(x,y) \neq \lambda(y,z) \text{ for all } \{(x,y),(y,z)\} \in E'$.
	Thus, given a feasible coloring $C: E \rightarrow \{0,1\}$ for $G'$, we can set $\lambda(x,y)=C((x,y)) \text{ for all } (x,y)\in E$. 
	Therefore, the instance is feasible if and only if $G'$ is bipartite which can be decided in polynomial time.
\end{proof}
\begin{figure}[tb]
	\begin{subfigure}[t]{0.4\textwidth}
		\centering
		\includegraphics[width=\linewidth]{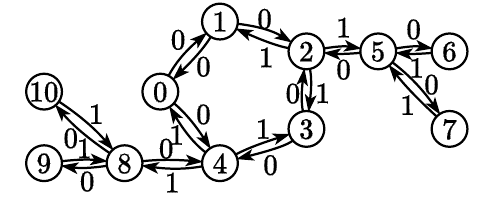}
		\caption{A graph $G$ with a feasible labeling for the case $D'_{0,10}=3$.}
		\label{fig:1Treea}
	\end{subfigure}
	\hfill
	\begin{subfigure}[t]{0.27\textwidth}
		\centering
		\includegraphics[width=.95\linewidth]{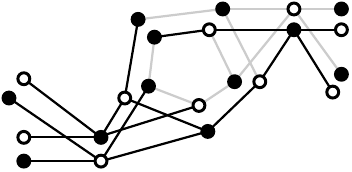}
		\caption{The bicolored graph $G'$ corresponding to the labeling of $G$ in \cref{fig:1Treea} with $D'_{0,10}=3$.}
		\label{fig:1Treeb}
	\end{subfigure}
	\hfill
	\begin{subfigure}[t]{0.27\textwidth}
		\centering
		\includegraphics[width=.95\linewidth]{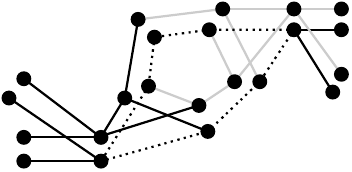}
		\caption{The graph $G''$  with $D''_{1,10}=4$ is not bipartite. The dotted lines indicate an cycle of odd length.}
		\label{fig:1Treec}
	\end{subfigure}
	\caption{Example: A graph $G=(V,E)$ containing one bidirected cycle of odd length and two attached trees. The auxiliary graphs $G'=(E,E')$ and $G''=(E,E'')$ for two different matrices $D'$ and $D''$ with $D'_{3,0}=D''_{3,0}=2$, $D'_{6,0}=D''_{6,0}=4$, $D'_{7,9}=D''_{7,9}=6$, $D'_{9,1}=D''_{9,1}=4$, $D'_{10,3}=D''_{10,3}=3$ and $D'_{0,10}=3$, $D''_{1,10}=4$. Black and dotted edges are derived by the specific $D'$ or $D''$, whereas gray edges correspond to potential additional constraints for general $D$. The graph $G_1'$ in \cref{fig:1Treeb} is bicolored. The colors of the vertices correspond to the feasible labeling shown in \cref{fig:1Treea}. The graph $G''$ is not bipartite, the instance with $D''$ is therefore infeasible.}
	\label{fig:1Tree}
\end{figure}
In sharp contrast, we will show in the following section, that \textsc{DiTGR} is NP-complete for general graphs even for the special case that $\Delta =2$.

%-------------------------------------------
\section{Hardness Results}\label{sec:hardness}

In this section, we present several hardness results. First, we provide a simple alternative proof of the \textsc{NP}-completeness of \textsc{TTR} for $\Delta \geq 3$. Second, we extend the previously known hard cases by showing that \textsc{TGR} and \textsc{DiTGR} are hard for $\Delta = 2$.
Third, we present hardness results for all remaining cases of values for  $k$ and  $\Delta$ for \textsc{DiTTR}.

\subsection{Warm-up: Simplified Proof for NP-completeness of \textsc{TTR}}
Mertzios et al.~\cite{ubTTR_pre} showed that  \textsc{TTR} is \textsc{NP}-complete, even with constant $\Delta$ and if the input graph has constant diameter or constant maximum degree. 
We provide a simpler proof for the first case by reducing from \textsc{$\Delta$-Coloring}. In particular, we also show that \textsc{TTR} is \textsc{NP}-complete even if the input graph is a star, answering an open question of~\cite[version~1]{ubTTR_pre}~\footnote{Independently of us, the authors meanwhile answered their question and obtained essentially the same results in version 2 of their paper.}.

Given an instance $G=(V,E)$ for \textsc{$\Delta$-Coloring} we create a new star-shaped graph $G'=(V',E')$ that has a single new vertex $u$ at its center and all the vertices of $G$ surrounding~$u$.
\begin{align*}
		 V'&=V \cup \{u\}\\
	 E'&=\{\{u,v\}\mid v \in V\}\\
	D_{v_1,v_2} &=
	\begin{cases}
		\Delta & \{v_1,v_2\}\in E \\
		\infty & \text{else}
	\end{cases} &\text{for all } v_1, v_2 \in V
\end{align*}
Setting $D_{v_1,v_2}=\Delta$ ensures $\lambda(u,v_1) \neq \lambda(u,v_2)$, because if they were equal, the path $P=(v_1,u,\lambda(u,v_1)) (u,v_2,\lambda(u,v_2)+\Delta)$ would be a fastest temporal $v_1$-$v_2$-path. 
Therefore, the path's duration would be $d(v_1,v_2)=\lambda(u,v_2)+\Delta-\lambda(u,v_1)+1=\Delta+1>D_{v_1,v_2}$.
Thus, given a feasible solution for \textsc{TTR}, for all pairs of vertices $(v_1,v_2)$ that are neighbors in $G$, the following holds: $\lambda(u,v_1) \neq \lambda(u,v_2)$. 
Therefore, we can set the color of a vertex $v$ to $\lambda(u,v)$ to get a valid solution for \textsc{$\Delta$-Coloring}. Conversely, every valid \textsc{$\Delta$-Coloring} immediately implies a valid solution for \textsc{TTR} if we assign the color of vertex $v$ to $\lambda(u,v)$ for all $v \in V$.
%all  
The resulting graph has a constant diameter of 2, but no constant maximum degree, and the constraints $D$ are symmetric.
This proof works for any period $\Delta >2$.
Note that the reduction from vertex coloring implies strong NP-completeness.

\subsection{NP-completeness of \textsc{DiTGR} and \textsc{TGR} for $\Delta = 2$}

\begin{theorem}\label{thm:2DTGR_NPC}
	\textsc{DiTGR} is NP-complete even for the special case where period $\Delta =2$, the edges $E$ are bidirected, the constraints $D$ are symmetric
	and the only odd cycle in $G$ is a triangle.
\end{theorem}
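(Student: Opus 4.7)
The plan is to reduce from monotone NAE-$3$-SAT, which is well-known to be NP-complete: given variables $x_1, \ldots, x_n$ and a collection of positive $3$-clauses, decide whether there is a truth assignment so that no clause is monochromatic.

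The reduction rests on a simple duration identity for $\Delta = 2$: along a length-$\ell$ temporal path with edges $e_1, \ldots, e_\ell$, the minimum duration equals $\ell + |\{i : \lambda(e_i) = \lambda(e_{i+1})\}|$, because at $\Delta = 2$ each same-parity transition wastes exactly one time unit before the next edge becomes available. Specialized to $\ell = 3$, setting $D = 3$ between the path endpoints forces full alternation of $\lambda(e_1), \lambda(e_2), \lambda(e_3)$ and in particular $\lambda(e_1) = \lambda(e_3)$, whereas $D = 4$ is satisfiable exactly when the three labels are not all equal, i.e., a $3$-NAE constraint. These are the building blocks of the reduction.

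Given a monotone NAE-$3$-SAT instance, I build $G$ as follows. For each variable $x_i$, create a dedicated variable edge $e_i$, whose label will encode the truth value of $x_i$. For each clause $c_j = (x_{i_1}, x_{i_2}, x_{i_3})$, attach a fresh length-$3$ \emph{clause path} with three clause-copy edges $g_{j,1}, g_{j,2}, g_{j,3}$ and set the $D$-entry between its endpoints to $4$, imposing the NAE constraint on the three copies. For each pair $(g_{j,k}, e_{i_k})$, glue them together via a fresh connector edge so that they form a length-$3$ \emph{equality path}, and set that $D$-entry to $3$, forcing $\lambda(g_{j,k}) = \lambda(e_{i_k})$. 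All remaining entries of $D$ are $\infty$. The resulting graph is bidirected, so $E$ is symmetric, and $D$ is symmetric by construction. A satisfying NAE-assignment lifts to a valid labeling and, conversely, any valid labeling induces a satisfying NAE-assignment via $\alpha(x_i) := \lambda(e_i)$, since the equality paths identify clause-copy labels with variable labels and the clause-path constraints are precisely the NAE clauses.

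The principal obstacle is certifying that the gadgets do not interact: one has to verify that no alternative temporal path violates any finite $D$-entry. This amounts to a local inspection, as every finite $D$-constraint governs a specific path of length at most $3$ and neighboring gadgets share only a single vertex, so no unintended shortcut exists. In the directed symmetric setting the reverse-direction labels, which are independent of the forward labels, must simultaneously satisfy a mirror-image system of NAE and equality constraints on fresh labels; this system is isomorphic to the forward one and is therefore satisfiable whenever the NAE-$3$-SAT instance is. Membership in NP is immediate, since a labeling $\lambda$ is a polynomial-size certificate whose validity can be verified by computing fastest temporal paths between all pairs with $D_{u,v} < \infty$ in polynomial time.
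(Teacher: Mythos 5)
Your reduction is correct and reaches the same conclusion as the paper, but via genuinely different gadgets, so a comparison is worthwhile. Both arguments reduce from monotone NAE-3-SAT; the paper, however, uses the Darmann--D\"ocker restriction (each variable occurs exactly four times, linear formula) and builds one $11$-vertex bidirected odd cycle per clause: a chain of $D=2$ constraints forbids waiting except at three designated vertices, the odd cycle length forces an odd number of waits, and $D_{0_i,4_i}=5$ caps this at one, which is exactly the not-all-equal condition; variable consistency is then enforced by extra connector edges between clause gadgets. You instead exploit the identity that at $\Delta=2$ the fastest traversal of an $\ell$-edge path costs $\ell$ plus the number of equal consecutive labels, so a slack-one bound $D=4$ on a $3$-edge path encodes NAE directly and a slack-zero bound $D=3$ encodes label equality of the first and third edges; together with a dedicated variable edge per variable this decouples the clauses and removes any need for linearity or bounded occurrence numbers, which is a genuine simplification. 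The one place where your write-up stays at the level of assertion is the absence of unintended realizing paths: since $D_{u,v}=4$ can also be met by an alternating path of length $4$, you must fix the connector attachment points concretely (all connectors of a variable at the same endpoint of its variable edge, the three clause connectors at pairwise distinct clause-path vertices, and three distinct variables per clause) and check that the unique path of length at most $D_{u,v}$ between each constrained pair is the intended one; this does go through, and it plays the same role as the paper's appeal to linearity. Your treatment of the directed symmetric case --- the reverse-direction labels satisfy an isomorphic, independent system --- matches the paper's argument exactly.
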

\begin{figure}[t]
	\begin{minipage}{0.99\textwidth}
		\centering
		\includegraphics[scale=0.75]{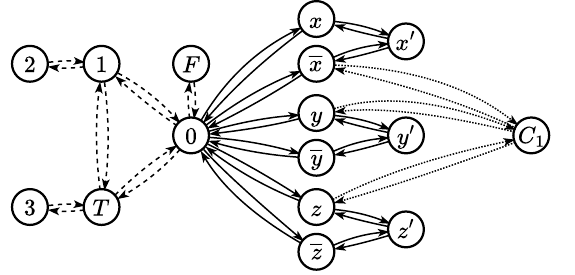}
		\caption{Construction with gadgets for the variables $x$, $y$, $z$ and for the clause $C_1=( \overline{x}\lor y \lor z)$. The dashed edges indicate the basic structure. The edges for the variable gadgets are solid, and those for the clauses are dotted.}
		\label{fig:Cgadget}
	\end{minipage}
\end{figure}

\begin{proof}
We reduce \textsc{3-SAT} to \textsc{DiTGR} as follows: Given a 3-CNF formula $\Phi=\{C_1, \mathellipsis, C_m\}$ with a set of variables $X$, construct a graph $G=(V,E)$ 
with
\begin{align*}
	V& =\{T,F,0,1,2,3\}\cup \{x,\overline{x}, x'|x\in X\}\cup \Phi \\
	E&=\{(2,1),(1,2),(3,T),(T,3),(1,T),(T,1),(1,0),(0,1),(T,0),(0,T),(F,0),(0,F)\} \\
	&\cup \{(0,x),(x,0),(0,\overline{x}),(\overline{x},0)|x \in X\} \\
	&\cup
	\{(x,x'),(x',x),(\overline{x},x'),(x',\overline{x})|x \in X\}\\
	&\cup \{(l,C),(C,l)|l\in \{x,\overline{x}|x\in X\} \land C\in \Phi \land l\in C\}
\end{align*} 
 as shown in \cref{fig:Cgadget} and set the upper bounds $D$ as follows:
\begin{align*}
	&D_{2,F}=D_{F,2}=D_{3,2}=D_{2,3}=D_{3,F}=D_{F,3}=3\\
	&D_{T,x'}=D_{x',T}=D_{F,x'}=D_{x',F}=3 &\text{ for all } x\in X\\
	&D_{x,\overline{x}}=D_{\overline{x},x}=2 &\text{ for all } x\in X\\ &D_{T,C}=D_{C,T}=3 &\text{ for all } C\in \Phi
\end{align*}

All other values of $D$ are set to $\infty$.
For each pair of vertices with a finite distance bound, this enforces that we have to realize a labeling without waiting on some shortest path.
Let without loss of generality $\lambda(T,1)=0$. Then the upper bounds $D$ between the vertices $2$, $3$ and $F$ enforce that $\lambda(T,1)=\lambda(1,T)=\lambda(T,0)=\lambda(0,T)=\lambda(1,0)=\lambda(0,1)=0$ and $\lambda(3,T)=\lambda(T,3)=\lambda(2,1)=\lambda(1,2)=\lambda(F,0)=\lambda(0,F)=1$.

For each variable $x\in X$, there exist two possible shortest paths from $0$ and therefore from~$T$ respectively $F$ to $x'$: one over $x$ and one over $\overline{x}$. 
Since the paths from $T$ and the paths from $F$ start with different labels ($0$ respectively $1$), they also have to continue with different labels to avoid waiting.
This means
$\lambda(0,x)=1-\lambda(0,\overline{x})=1-\lambda(x,x')=\lambda(\overline{x},x')$.
The same holds for the opposite direction: $\lambda(x,0)=1-\lambda(\overline{x},0)=1-\lambda(x',x)=\lambda(x',\overline{x})$.
Then $D_{x,\overline{x}}=D_{\overline{x},x}=2$ enforces that the labels in both directions have to be the same, i.e. $\lambda(0,x)=\lambda(x,0)$:
Going without waiting from $x$ to $\overline{x}$ over 0 requires  $\lambda(x,0)=1-\lambda(0,\overline{x})=\lambda(0,x)$.
Going over $x'$ requires  $\lambda(x,x')= 1-\lambda(x',\overline{x})=\lambda(x',x)$ which results in the same labeling.

For every clause $C=\{l_1,l_2,l_3\}$ there are three shortest paths from $T$ to $C$: each one leads over one of the literals $l_1$, $l_2$ and $l_3$ in C.
A fastest path can only have a duration of 3 as enforced by the upper bounds, if the edge $(0,l)$ to the literal $l$ has the label $1$. 
Therefore at least one of the edges $(0,l_1)$, $(0,l_2)$, $(0,l_3)$ has to have the label $1$.

An assignment $a: X \mapsto \{0,1\}$ corresponds to the labeling of the edges from $0$ to the variables,
i.e. $\lambda(0,x)=\lambda(x,0)=a(x)$ for all $x \in X$.
If $\Phi \in$ \textsc{3-SAT} and $a$ is a satisfying assignment, then we
extend $\lambda$ to $\lambda(l,C)=\lambda(C,l)= 1-\lambda(0,l)$ for every clause $C\in \Phi$ which is satified by the literal $l$, which leads to no waiting on the path between $T$ and $C$
and thus $(G,D,2) \in$ \textsc{DiTGR}.
If $(G,D,2) \in$ \textsc{DiTGR} with the labeling $\lambda$, then the corresponding assignment $a$ must satisfy $\Phi \in$ \textsc{3-SAT}.
\end{proof}
From the proof, we can therefore conclude:

\begin{corollary}\label{thm:2TGR_NPC}
	\textsc{TGR} is NP-complete even for the special case that $\Delta =2$.
\end{corollary}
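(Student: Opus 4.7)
The plan is to leverage the fact that the reduction in the proof of \cref{thm:2DTGR_NPC} already produces a \textsc{DiTGR} instance in which both the edge set $E$ and the constraint matrix $D$ are symmetric, and then transfer this symmetric directed instance to an undirected \textsc{TGR} instance of essentially the same size.

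First I would describe the straightforward transformation: given the symmetric \textsc{DiTGR} instance $(G = (V, E), D, \Delta = 2)$ produced by the reduction from the linear, negation-free, exactly-four-occurrence version of \textsc{Not-All-Equal 3-Sat}, construct the undirected instance $(G^u = (V, E^u), D, \Delta = 2)$ by replacing every symmetric pair $\{(u,v),(v,u)\} \subseteq E$ with a single undirected edge $\{u,v\}$. The vertex set, the period, and the constraint matrix $D$ remain unchanged. This is clearly a polynomial-time transformation.

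Second, I would argue the equivalence of the two instances. For the ``only if'' direction, assume that the undirected instance has a valid labeling $\lambda^u : E^u \to \{0,1\}$. Define a directed labeling $\lambda$ on $E$ by $\lambda(u,v) = \lambda(v,u) = \lambda^u(\{u,v\})$. Since every temporal walk in $(G^u, \lambda^u, 2)$ corresponds to a temporal walk in $(G, \lambda, 2)$ using the antiparallel directed edges with identical timestamps, durations of fastest paths coincide, so all constraints in $D$ are met, and hence $\Phi$ is satisfiable. For the ``if'' direction, the proof of \cref{thm:2DTGR_NPC} explicitly shows that whenever $\Phi$ is a Yes-instance of the restricted \textsc{Not-All-Equal 3-Sat} variant, one can choose a symmetric labeling $\lambda(u,v) = \lambda(v,u)$ that realizes $D$ on the directed graph; projecting this to $\lambda^u(\{u,v\}) = \lambda(u,v)$ immediately yields a valid undirected labeling, because every temporal path in $(G^u, \lambda^u, 2)$ is realized by the corresponding directed path at the same timestamps.

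Containment in \textsc{NP} is routine: given a proposed labeling $\lambda^u$, compute fastest temporal paths between all pairs in polynomial time (by standard temporal BFS over the $\Delta$-periodic expansion) and verify the bounds in $D$. The main obstacle, which is essentially settled already inside the proof of \cref{thm:2DTGR_NPC}, is the symmetry of the labeling: one must make sure that the ``$a = b = c$ is not possible'' argument on the 11-cycle gadget and the variable-consistency connections remain valid when a single label is shared by both orientations of each edge, which is precisely what the first of the two extensions at the end of that proof provides. Consequently, \textsc{TGR} with $\Delta = 2$ is \textsc{NP}-complete.
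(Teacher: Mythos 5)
Your proposal is correct and follows essentially the same route as the paper: the corollary is obtained by observing that the symmetric version of the reduction in \cref{thm:2DTGR_NPC} (symmetric $E$ and $D$, with $\lambda(u,v)=\lambda(v,u)$) collapses directly to an undirected \textsc{TGR} instance with identical fastest-path durations. The details you spell out (the edge-collapsing transformation and the two-directional equivalence of labelings) are exactly what the paper leaves implicit in its one-line derivation.
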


%----------------------------------------
\subsection{NP-complete Cases with Linear Size Gadgets}

In this and the following subsection we present hardness results for all other cases, i.e.\ for all pairs of values $(k,\Delta)$ as shown in \cref{fig:hardness}. 
In all cases, the basic idea is to construct gadgets to enforce that for some specific edge $(v_1,v_2)$ the timestamps for both directions have the same value, i.e. $\lambda(v_1,v_2)=\lambda(v_2,v_1)$.
However, every value of $\lambda$ is possible, it is not restricted by the gadget. 
In fact, every solution implies $\Delta-1$ further symmetrical solutions in which all values are shifted by some value $x < \Delta$.
By enforcing this for every single edge of an instance $I$, we can reduce the undirected \textsc{TTR} to \textsc{DiTTR}.
All gadgets considered in this paper have a size polynomial in $\Delta$ and $k$. We distinguish between cases where we found linear-size gadgets and cases where we found quadratic-size gadgets.

\begin{lemma}\label{lemma:gadgetsRedu}
	If there is a polynomial size gadget which is a realizable instance of \textsc{$\Delta$-$k$-DiTTR} and which enforces $\lambda(v_1,v_2)=\lambda(v_2,v_1)$ for some specific pair of vertices $(v_1,v_2)$, we can reduce \textsc{$\Delta$-$k$-TTR} to \textsc{$\Delta$-$k$-DiTTR}.
\end{lemma}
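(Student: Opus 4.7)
The plan is to reduce \textsc{$\Delta$-$k$-TTR} to \textsc{$\Delta$-$k$-DiTTR} by substituting a private copy of the assumed gadget for every edge of the undirected input tree. Formally, given a \textsc{TTR} instance $(T, D, \Delta)$ with $T = (V_T, E_T)$, I would construct a \textsc{DiTTR} instance $(T', D', \Delta)$ as follows. For every $e = \{u, v\} \in E_T$, take a fresh copy of the gadget $G_e$ with interface vertices $v_1^e, v_2^e$ and identify $v_1^e$ with $u$ and $v_2^e$ with $v$; the directed arcs $(u,v)$ and $(v,u)$ are supplied by that copy itself, since the gadget by assumption already contains the pair of arcs whose labels it forces to coincide. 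Then $V'$ consists of $V_T$ together with all fresh internal vertices of the gadget copies, and $E'$ is the union of all gadget arc sets. The matrix $D'$ equals $D$ on $V_T \times V_T$, inherits the gadget's distance constraints on each $G_e$, and is $\infty$ on all remaining pairs.

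Next I would verify that $T'$ is a bidirected tree and that the slack property survives. Since each $G_e$ is itself a bidirected tree and distinct gadget copies share only vertices of $V_T$ (and at most one such vertex per pair of gadgets), gluing them at these single cut vertices yields a tree. Because every internal gadget vertex lies in a subtree hanging off an interface vertex, the unique simple path in $T'$ between two backbone vertices $u, v \in V_T$ uses exactly the backbone arcs corresponding to the unique $u$-$v$ path in $T$, so $\hat{d}_{T'}(u,v) = \hat{d}_T(u,v)$. The inequality $D'_{p,q} \ge \hat{d}_{T'}(p,q) + k$ therefore carries over from the \textsc{TTR} instance on backbone pairs and from the assumption that the gadget is itself a \textsc{$\Delta$-$k$-DiTTR} instance on the gadget-internal pairs.

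For the equivalence of solutions in the forward direction, given an undirected labeling $\lambda$ that realizes $(T, D, \Delta)$, I build $\lambda'$ on $T'$ by first setting $\lambda'(u,v) = \lambda'(v,u) = \lambda(\{u,v\})$ on each backbone edge, and then, for each $G_e$, taking any realization of $G_e$ and applying the shift-by-$x$ symmetry noted above the lemma to pick a translate whose value on the two interface arcs equals $\lambda(\{u,v\})$. This is possible because the two interface-arc labels are equal in every realization by the gadget's enforcement, so one and the same shift aligns both. Use the shifted realization on the gadget's internal arcs. Conversely, given a directed realization $\lambda'$ of $(T', D', \Delta)$, the gadget property yields $\lambda'(u,v) = \lambda'(v,u)$ on every backbone edge, so setting $\lambda(\{u,v\}) := \lambda'(u,v)$ produces a well-defined undirected labeling.

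Finally I would match fastest-path durations on both sides. For any $u, v \in V_T$ the unique directed $u$-$v$ path in $T'$ lies entirely on backbone arcs whose labels agree with $\lambda$ on the corresponding undirected edges, and since the two arcs of each backbone edge have the same label the available timestamp set for the undirected edge equals that of either directed arc; hence the fastest-path duration in $(T', \lambda', \Delta)$ coincides with that in $(T, \lambda, \Delta)$. Together with $D'_{u,v} = D_{u,v}$ this gives equivalence of the backbone constraints. The gadget-internal constraints are satisfied in the forward direction by the chosen shifted realization of each $G_e$ and in the backward direction by hypothesis on $\lambda'$, while cross-gadget pairs carry no constraint at all. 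The main obstacle I anticipate is ruling out unintended interaction between different gadget copies and unwanted shortcuts across a backbone vertex: this is handled by $T'$ being a tree, so that the unique simple path between any two vertices inside one gadget stays inside that gadget and cannot be shortened by neighboring gadgets. The construction has size linear in $|V_T|$ times the fixed gadget size, so the reduction is polynomial.
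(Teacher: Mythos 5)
Your proposal is correct and follows essentially the same route as the paper's proof: one private gadget copy per tree edge glued at the two endpoints, $D'$ taken from $D$ on backbone pairs and from the gadget elsewhere (with $\infty$ for cross-gadget pairs), the tree structure ruling out shortcuts, and the shift-by-$x$ symmetry used to align each gadget realization with the undirected label in the forward direction. Your write-up merely spells out a few details (preservation of $\hat d$ and of the slack bound) that the paper leaves implicit.
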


\begin{proof}
	Let $I=(G=(V,E),D,\Delta)$ be an instance of \textsc{TTR} and  $(\hat{G},\hat{D},\Delta)$ be a gadget enforcing $\lambda(a,b)=\lambda(b,a)$.
	First we create a directed graph $G'$ by replacing every undirected edge of $G$ by two antiparallel directed edges.
	Then we enforce $\lambda(v,w)=\lambda(w,v)$ for every edge $e=\{u,v\}\in E$ by inserting a copy of the gadget into $G'$, identifying $(u,v)$ with $(a,b)$ and $(v,u)$ with $(b,a)$, and setting $D'$ consistently with $D$ and $\hat{D}$.
	Since the resulting graph is a tree ($G$ and the gadget $G'$ are trees that are connected by one common edge) and any two copies of the gadget share at most one common vertex the construction does not produce any shortcuts. 
	
	Thus, every valid solution $\lambda'$ for \textsc{DiTTR} with respect to $G'$ and $D'$ immediately implies a valid solution $\lambda$ for \textsc{TTR} if we set $\lambda(\{u,v\})=\lambda'((u,v))=\lambda'((v,u))$.
	Conversely, we can construct a valid solution for \textsc{DiTTR} from a valid solution for \textsc{TTR} by setting $\lambda'((u,v))=\lambda'((v,u))=\lambda(\{u,v\})$.
	Since the gadget is feasible, there exists a solution where the timestamp of the edges $(a,b)$ and $(b,a)$ is 0. We can then set the labels in the copy of the gadget for each edge $\{u,v\}$ to a solution that is shifted modulo $\Delta$ by $\lambda(\{u,v\})$.
	Since $\Delta$ and $k$ are fixed, this construction is possible in polynomial time if the time to compute the gadget is a function of only $\Delta$ and $k$. 
\end{proof}

\begin{restatable}{lem}{GadgetY}\label{thm:GadgetY}
	We can construct a gadget $G_{\Delta,0}$ that enforces $\lambda(v_1,v_2)=\lambda(v_2,v_1)$ for some pair of vertices $(v_1,v_2)$ for odd period $\Delta$ and minimum slack $k=0$.
\end{restatable}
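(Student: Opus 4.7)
The plan is to construct the gadget as a bidirected path $y_1 = u_0, u_1, \dots, u_\Delta = y_2$ of length $\Delta$ with two extra leaves attached to each of $y_1$ and $y_2$, and to designate the middle edge as $(v_1, v_2)$. Because $\Delta$ is odd, the middle edge is well-defined, so I would set $v_1 := u_{(\Delta-1)/2}$ and $v_2 := u_{(\Delta+1)/2}$. For the distance matrix I would put $D_{y_1,y_2} = D_{y_2,y_1} = \Delta$, $D_{x,x'} = 2$ for every ordered pair of distinct neighbors $x,x'$ of $y_1$ (and symmetrically of $y_2$), and leave every other entry equal to $\infty$. Each specified entry is exactly the corresponding static distance, so this is a valid \textsc{$\Delta$-$0$-DiTTR} instance.

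The first step of the analysis is to show that $y_1$ and $y_2$ behave as genuine branching vertices with fixed incoming and outgoing timestamps. Repeating the argument used in \cref{obs:exactAExt}, the constraint $D_{x,x'} = 2$ for two neighbors $x,x'$ of $y_j$ forbids waiting at $y_j$ on the length-$2$ path $x \to y_j \to x'$, giving $\lambda(y_j,x') \equiv \lambda(x,y_j) + 1 \pmod \Delta$. Combining the six such conditions at the three neighbors of $y_j$ forces all three incoming edges of $y_j$ to carry a common label $a_j$ and all three outgoing edges to carry $a_j + 1$.

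Next, I would propagate along the central path. The constraint $D_{y_1,y_2} = \Delta$ forces zero waiting on the unique directed $y_1$-$y_2$-path, so its consecutive edge labels are $a_1 + 1, a_1 + 2, \dots, a_1 + \Delta \pmod \Delta$; the last coincides with the common incoming label at $y_2$, which yields $a_2 \equiv a_1 \pmod \Delta$. Reading off the label at position $(\Delta-1)/2$ gives $\lambda(v_1,v_2) \equiv a_1 + (\Delta+1)/2 \pmod \Delta$, and applying the identical argument to the reverse path produces $\lambda(v_2,v_1) \equiv a_2 + (\Delta+1)/2 \equiv a_1 + (\Delta+1)/2$. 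Hence $\lambda(v_1,v_2) = \lambda(v_2,v_1)$, as required.

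For realizability and flexibility I would verify that for each choice of $a_1 \in \{0,\dots,\Delta-1\}$ the explicit labeling above (incoming labels $a_1$ at $y_1$, outgoing $a_1+1$, incrementing labels along the path, mirrored at $y_2$) satisfies every specified $D$-entry; in particular the common value $\lambda(v_1,v_2)$ ranges over all of $\{0,\dots,\Delta-1\}$ and is therefore unconstrained, which is essential for the way the gadget is plugged in via \cref{lemma:gadgetsRedu}. The main obstacle I foresee is the first step: arguing carefully that the six distance-$2$ constraints at $y_j$ really force the incoming label from the path neighbor $u_1$ (respectively $u_{\Delta-1}$) to coincide with the incoming labels from the two extra leaves. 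This is precisely why two auxiliary leaves are needed rather than one---with a single extra leaf the two resulting equivalences do not close the system.
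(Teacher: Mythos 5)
Your proof is correct and relies on the same mechanism as the paper's: a degree-3 hub with two pendant leaves and distance-2 constraints pins the hub's arrival/departure labels, exact-distance constraints force zero waiting along a bidirected path, and the oddness of $\Delta$ makes the forward and backward labels coincide on a suitable edge. The paper's gadget is slightly more economical---a single hub with a pendant path of length $\lfloor \Delta/2\rfloor+1$ and the target edge at its far end, rather than your two hubs joined by a length-$\Delta$ path with the target edge in the middle---but this is only a structural, not a conceptual, difference.
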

\begin{proof}
	We construct this gadget as follows:
	\begin{align*} 
		V&=\{1,\mathellipsis,4+\lfloor \frac{\Delta}{2}\rfloor \}\\
		E&=\{(1,3),(3,1),(2,3),(3,2)\}\\
		&\cup \{(i,i+1)| i \in \{3,\mathellipsis, 3+\lfloor \frac{\Delta}{2}\rfloor\}\} 
		\cup \{(i+1,i)| i \in \{3,\mathellipsis, 3+\lfloor \frac{\Delta}{2}\rfloor\}\}\\
		D_{1,2}&=D_{2,1}=2\\
		D_{1,4+\lfloor \frac{\Delta}{2}\rfloor}&=D_{4+\lfloor \frac{\Delta}{2}\rfloor,1}=D_{2,4+\lfloor \frac{\Delta}{2}\rfloor}=D_{4+\lfloor \frac{\Delta}{2}\rfloor,2}=2+\lfloor \frac{\Delta}{2}\rfloor
	\end{align*}
	All other values of $D$ are set to $\infty$.
	As no waiting time is allowed, vertex 3 has a fixed arrival/departure time (see proof for \cref*{obs:exact}). 
	Let $t(3)$ be the timestamps of its incoming edges.
	For the sake of convenience, let us assume that $t(3) = \Delta-1$ and that the departure time is $0$. 
	This implies $\lambda(3,4)=0$ and therefore $\lambda(3+\lfloor \frac{\Delta}{2}\rfloor,4+\lfloor \frac{\Delta}{2}\rfloor) = \lfloor \frac{\Delta}{2}\rfloor$.
	Symmetrically, the following is also true: $\lambda(4,3)=\Delta-1$ and $\lambda(4+\lfloor \frac{\Delta}{2}\rfloor,3+\lfloor \frac{\Delta}{2}\rfloor) = \Delta-1-\lfloor \frac{\Delta}{2}\rfloor$.
	As $\Delta$ is odd this means $\lambda(4+\lfloor \frac{\Delta}{2}\rfloor,3+\lfloor \frac{\Delta}{2}\rfloor)=\lambda(3+\lfloor \frac{\Delta}{2}\rfloor,4+\lfloor \frac{\Delta}{2}\rfloor)= \lfloor \frac{\Delta}{2}\rfloor$. Therefore we can enforce $\lambda(v_1,v_2)=\lambda(v_2,v_1)$ for the vertices $v_1=3+\lfloor \frac{\Delta}{2}\rfloor$ and $v_2=4+\lfloor \frac{\Delta}{2}\rfloor$.
\end{proof}
For $ \Delta=3$ the gadget is shown in \cref{fig:gadget1}. In this gadget, $\lambda(5,4)=\lambda(4+\lfloor \frac{\Delta}{2}\rfloor,3+\lfloor \frac{\Delta}{2}\rfloor)=\lambda(3+\lfloor \frac{\Delta}{2}\rfloor,4+\lfloor \frac{\Delta}{2}\rfloor)=\lambda(4,5)$ is enforced.
This gadget has a size linear in $\Delta$ and only six values in $D$ that are not infinity.
Furthermore, $D$ is symmetric.

\begin{figure}[bt]
	\begin{minipage}[b]{0.3\textwidth}
		\centering
		\includegraphics[scale=0.7]{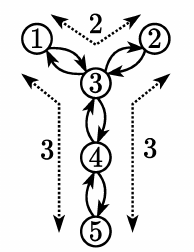}
		\caption{Gadget for $\Delta=3$ and $k=0$ that enforces $\lambda(4,5)=\lambda(5,4)$. The dotted lines indicate  $D_{v,w} <\infty$.}
		\label{fig:gadget1}
	\end{minipage}
	\hfill
	\begin{minipage}[b]{0.3\textwidth}
		\centering
		\includegraphics[scale=0.7]{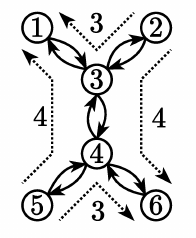}
		\caption{Gadget for $\Delta=5$ and $k=1$ that enforces $\lambda(3,4)=\lambda(4,3)$. The dotted lines indicate  $D_{v,w} <\infty$.}
		\label{fig:gadget2}
	\end{minipage}
	\hfill
	\begin{minipage}[b]{0.3\textwidth}
		\centering
		\includegraphics[scale=0.7]{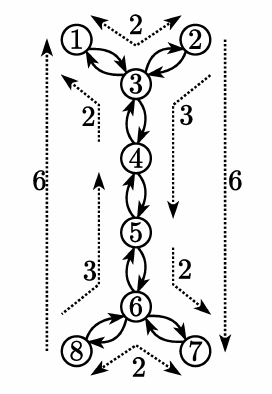}
		\caption{Gadget for $\Delta=4$ and $k=0$ that enforces $\lambda(4,5)=\lambda(5,4)$. The dotted lines indicate $D_{v,w} <\infty$.}
		\label{fig:gadget3}
	\end{minipage}
\end{figure}

\begin{restatable}{lem}{GadgetX}\label{thm:GadgetX}
	There is a gadget $G_{\Delta,k}$ that even with the limitation $D_{u,v}\geq \hat{d}(u,v)+k$ enforces $\lambda(v_1,v_2)=\lambda(v_2,v_1)$ for some pair of vertices $(v_1,v_2)$  as long as $\Delta \geq 4\cdot k +1 \land k\geq 1$ and minimum slack $k$ is odd.
	For any even values of $k$  with $\Delta \geq 4\cdot k +5$ we can simply use the gadget for~$k+1$.
\end{restatable}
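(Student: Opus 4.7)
The plan is to generalize the construction from \cref{thm:GadgetY} by thickening the branching region enough to absorb the mandatory slack $k$, while keeping the main path-arm symmetric so that the ``collision'' argument at the tip still produces $\lambda(v_1,v_2)=\lambda(v_2,v_1)$. In the $k=0$ case, the two leaves $1,2$ together with $D_{1,2}=2$ rigidly fix the arrival/departure times at the central vertex $3$, leaving no waiting on any path. For $k\geq 1$ this rigid pinning is no longer available, so I would replace the leaves by a small subgadget at the center whose constraints force a fixed \emph{parity pattern} of labels at the central vertex, and then attach a main path of length $\lceil\Delta/2\rceil$ whose endpoints are $v_1$ and $v_2$.

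Concretely, the construction would consist of (i) a central ``anchor'' vertex $c$ with two short symmetric side-branches of length proportional to $k$, with $D$-values set to $\hat{d}+k$ on the critical pairs of leaves; and (ii) a main bidirected path $c=u_0,u_1,\ldots,u_{\lfloor\Delta/2\rfloor}$ with $v_1=u_{\lfloor\Delta/2\rfloor-1}$ and $v_2=u_{\lfloor\Delta/2\rfloor}$, and with $D_{\ell,v_2}=D_{v_2,\ell}=\hat{d}+k$ for each leaf $\ell$ of the anchor. The assumption $\Delta\geq 4k+1$ is exactly what forces the total duration budget on any ``anchor-leaf to tip'' path to admit at most $k$ slack, ruling out a full period wrap-around: a solution that tried to absorb an extra $\Delta$ into the slack would need $\Delta\leq k$, contradicting $\Delta\geq 4k+1$.

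The proof would then proceed in two stages. First, by a variant of the argument used in \cref{obs:exactAExt}, the anchor subgadget forces a rigid pattern of labels on the edges incident to $c$, uniquely up to a single global shift in $\mathbb{Z}_\Delta$; this shift is harmless because the gadget is only intended to equate the forward and backward label on $(v_1,v_2)$, not to pin them to specific values. Second, on the main path, the combination of the slack constraint $\hat{d}+k$ with $\Delta\geq 4k+1$ forces both the forward and the backward label sequences along $u_0,u_1,\ldots,u_{\lfloor\Delta/2\rfloor}$ to be arithmetic progressions with common difference~$+1$ modulo $\Delta$, with the at most $k$ available waiting units distributed freely among the interior vertices. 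Since the path length together with the parity behaviour of $\lfloor\Delta/2\rfloor$ mod~$\Delta$ is chosen so that the two progressions meet at the tip edge, one obtains $\lambda(v_1,v_2)=\lambda(v_2,v_1)$, as required. The even-$k$ clause is then immediate: a realization of $G_{\Delta,k+1}$ is automatically a realization of $G_{\Delta,k}$, since increasing the slack weakens the constraint $D_{u,v}\geq \hat{d}(u,v)+k$.

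The hardness assumption that $k$ be odd is needed precisely in the parity/collision step of the second stage: the arithmetic-progression argument that culminated in $\lambda(v_1,v_2)=\lambda(v_2,v_1)$ in \cref{thm:GadgetY} relied on $\Delta$ being odd, and a careful bookkeeping shows that for the enlarged gadget the relevant parity flips with $k$, so that odd $k$ plays the same role that odd $\Delta$ did before. The main obstacle I anticipate is exactly this parity/slack bookkeeping: verifying that among all distributions of the $k$ available waiting units along the path, none produces an alternative label on $(v_1,v_2)$ that still satisfies every inequality in $D$. I expect the bound $\Delta\geq 4k+1$ to appear in precisely this case analysis, providing the headroom needed to rule out ``slack redistribution'' attacks that shift the arithmetic progression by a non-zero multiple of the period.
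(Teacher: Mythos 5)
There is a genuine gap at the heart of your second stage. You assert that the slack constraints of the form $\hat{d}+k$ together with $\Delta\geq 4k+1$ force the forward and backward label sequences along the main path to be arithmetic progressions ``with the at most $k$ available waiting units distributed freely among the interior vertices,'' and that the two progressions nevertheless meet exactly at the tip edge. These two claims are in tension: if $k\geq 1$ units of waiting can be placed freely, the label reaching the tip can take any of $k+1$ values, and no exact equality $\lambda(v_1,v_2)=\lambda(v_2,v_1)$ follows. The entire difficulty of this lemma is to design constraints that \emph{force zero waiting} on the critical paths even though every finite entry of $D$ is obliged to grant slack at least $k$; a single anchor-leaf-to-tip constraint with slack $k$ cannot do this. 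The paper's gadget achieves rigidity by attaching \emph{two} leaves at \emph{each} end of a path of $k$ edges and chaining interval arithmetic through the four cross constraints ($D_{2,1}=D_{k+4,k+5}=k+2$, $D_{2,k+5}=D_{k+4,1}=2k+2$): the long constraints confine $\lambda(3,1)$ to an interval of width $k$, which propagates to an interval of width $3k$ for $\lambda(k+3,k+5)$, and the short constraint $D_{k+4,k+5}=k+2$ then eliminates all but one value, which retroactively forces zero waiting everywhere. This is also where $\Delta\geq 4k+1$ is really used --- to keep that width-$3k$ interval from wrapping around --- not merely to exclude a full-period detour as you suggest (that would only need $\Delta>k$).

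Two further problems with the construction as described. First, you place the special edge at the tip of a path of length $\lfloor\Delta/2\rfloor$ and invoke the parity of $\lfloor\Delta/2\rfloor$ modulo $\Delta$, i.e.\ the collision mechanism of \cref{thm:GadgetY}, which requires $\Delta$ odd; but the lemma must cover even $\Delta$ as well (e.g.\ $k=1$, $\Delta=6$). The paper instead puts the special edge at the \emph{midpoint} of the length-$k$ path between the two double-leaved ends, so that the two rigid, symmetric progressions meet there; oddness of $k$ (not of $\Delta$) is what makes that midpoint an edge, which is the correct reason the hypothesis ``$k$ odd'' appears. Your intuition that ``the relevant parity flips from $\Delta$ to $k$'' points in the right direction, but the gadget you describe does not realize it. Second, your even-$k$ clause is essentially right in spirit (the gadget for $k+1$ is a legal instance for slack parameter $k$ and still enforces the equality), though note this silently strengthens the hypothesis to $\Delta\geq 4k+5$ for even $k$, as the paper remarks.
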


\begin{proof}
	The following gadget enforces $\lambda(3+\lfloor \frac{k}{2}\rfloor,3+\lceil \frac{k}{2}\rceil)=\lambda(3+\lceil \frac{k}{2}\rceil,3+\lfloor \frac{k}{2}\rfloor)$:
	\begin{align*}
		V&=\{1,\mathellipsis,k+5 \}\\
		E&=\{(k+3,k+4),(k+4,k+3),(k+3,k+5),(k+5,k+3)\}
		\\ &\cup \{(1,3),(3,1),(2,3),(3,2)\}
		\\ &\cup \{(i,i+1)| i \in \{3,\mathellipsis, k+2\}\} 
		\\ &\cup \{(i+1,i)| i \in \{3,\mathellipsis, k+2\}\}
		\\D_{2,1}&=D_{k+4,k+5}=k+2
		\\D_{2,k+5}&=D_{k+4,1}=2\cdot k +2
	\end{align*}
	All other values of $D$ are set to $\infty$.
	Let w.l.o.g. $\lambda(k+4,k+3)=0$.
	We show that there is only one solution  by looking at possible values of $\lambda$ for the edges from and to leaves.
	We observe $\hat{d}(k+4,1)=\hat{d}(2,k+5)=k+2$.
	As there is no waiting required but at most a waiting time of $k$ allowed on the path from vertex $k+4$ to vertex $1$, the following must hold: $\lambda(3,1) \in  \{k+1, \mathellipsis, 2\cdot k +1 \}$ (recall $\Delta \geq 4\cdot k +1$).
	That means in turn that $\lambda(2,3)$ is in the range $\{0,1, \dots, 2\cdot k\}$. The corner case $\lambda(2,3) = 0$
	occurs when $\lambda(3,1) = k+1$ and with maximum waiting time.
	Conversely, with $\lambda(3,1)=2\cdot k +1$ and with no waiting time, $\lambda(2,3)$ is at most $2\cdot k $.
	
	In the same way we can conclude that $\lambda(k+3,k+5) \in \{k+1, \mathellipsis, 4\cdot k , (4\cdot k +1) \bmod \Delta  \}$.
	Because $\Delta \geq 4\cdot k +1$ only the last item may be affected by the modulo operator and would be assigned the value 0 in that case.
	Suppose now that $\lambda(k+3,k+5) \in \{k+2, \mathellipsis, 4\cdot k, (4\cdot k +1) \bmod \Delta \}$, i.e. any other possible value than $k+1$.
	Then $d(k+4, k+5)\geq k+2 -0 +1 =k+3 > D_{k+4,k+5}=k+2$ and thus, the solution cannot be valid for these values of $\lambda(k+3,k+5)$.
	Therefore $\lambda(k+3,k+5) =k+1$ which means that there is no waiting time at all on the paths from $k+4$ to $1$ and from $2$ to $k+5$ but a waiting time of $k$ on the paths from $2$ to $1$ and from $k+4$ to $k+5$.
	Thus, $\lambda(3+\lfloor \frac{k}{2}\rfloor,3+\lceil \frac{k}{2}\rceil)=\lceil \frac{k}{2}\rceil=\lambda(3+\lceil \frac{k}{2}\rceil,3+\lfloor \frac{k}{2}\rfloor)$.
	Therefore we can enforce $\lambda(v_1,v_2)=\lambda(v_2,v_1)$ for the vertices $v_1=3+\lceil \frac{k}{2}\rceil$ and $v_2=3+\lfloor \frac{k}{2}\rfloor$.
\end{proof}
\cref{fig:gadget2} shows the gadget for $\Delta=5$ and $k=1$.
This gadget also has a size linear in $k$ and therefore also in $\Delta$ and a constant amount of values in $D$ that are not infinity. 
The constraints $D$ are not symmetric; however, setting them as such doesn't affect the proof: For all feasible solutions $\lambda(3+\lfloor \frac{k}{2}\rfloor,3+\lceil \frac{k}{2}\rceil)=\lambda(3+\lceil \frac{k}{2}\rceil,3+\lfloor \frac{k}{2}\rfloor)$ holds and there is still at least one feasible solution.

\begin{restatable}{lem}{GadgetFour}\label{thm:Delta4}
	There is a gadget for $\Delta=4$ and $k=0$ that enforces $\lambda(v_1,v_2)=\lambda(v_2,v_1)$ for some pair of vertices $(v_1,v_2)$.
\end{restatable}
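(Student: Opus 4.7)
The plan is to build a bidirected tree gadget whose distance matrix admits (up to the global cyclic shift by $\Delta$) a unique labeling in which the two directions of a designated edge carry the same timestamp.  For odd $\Delta$ the gadget of \cref{thm:GadgetY} succeeds because the forward and backward labels at the end of a "single-anchor plus straight path" differ by $\Delta - 1 - 2\ell$, which vanishes modulo an odd $\Delta$ when $\ell = (\Delta - 1)/2$.  For $\Delta = 4$, however, $\Delta - 1 - 2\ell$ is odd for every integer $\ell$, so no purely no-waiting gadget with a single branching anchor can produce equal labels in both directions; one must either use a second branching anchor or deliberately allow waiting on one side.

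Concretely, I would take an eight-vertex bidirected tree on the vertex set $\{a, b, c, v_1, v_2, d, e, f\}$ whose undirected edges are $\{a,c\}, \{b,c\}, \{c,v_1\}, \{v_1,v_2\}, \{v_2,d\}, \{d,e\}, \{d,f\}$, with $(v_1, v_2)$ as the enforced edge.  First, I would declare $c$ and $d$ branching by setting all pairwise $D$-values between their three respective neighbors to $2$; by the argument in the proof of \cref{obs:exact}, each then has a uniform incoming label $t_c$ (respectively $t_d$) and outgoing label $t_c + 1$ (respectively $t_d + 1$).  Second, I would add the long constraint $D_{a, e} = 5$ along the unique $a$-to-$e$ path of length $5$; since this $D$-value equals the static distance, it forces the whole path to be traversed without waiting, which pins $\lambda(v_1, v_2) = t_c + 2$ and $t_d \equiv t_c + 3 \pmod{4}$.

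The heart of the construction is the short asymmetric pair $D_{d, v_1} = D_{v_2, c} = 3$.  Writing $y := \lambda(v_2, v_1)$ and tabulating the fastest-path duration as a function of $y \in \{t_c,\, t_c + 1,\, t_c + 2,\, t_c + 3\}$, I would show: using $\lambda(d, v_2) = t_c$, the $d$-to-$v_1$ path of static distance $2$ has durations $5, 2, 3, 4$ respectively, so $D_{d, v_1} = 3$ admits only $y \in \{t_c + 1,\, t_c + 2\}$; using $\lambda(v_1, c) = t_c$, the $v_2$-to-$c$ path of distance $2$ has durations $5, 4, 3, 2$, so $D_{v_2, c} = 3$ admits only $y \in \{t_c + 2,\, t_c + 3\}$.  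Their intersection is the single value $y = t_c + 2 = \lambda(v_1, v_2)$, yielding the desired equality.  I would conclude by exhibiting the explicit labeling ($t_c = 0$ gives $\lambda(v_1, v_2) = \lambda(v_2, v_1) = 2$, $\lambda(v_2, d) = 3$, and $\lambda(d, \cdot) = 0$) and verifying every finite $D$-entry by direct duration computation.  The main obstacle is this intersection argument: neither short constraint alone pins down $y$, so the two must be complementary in a tight way, which succeeds only because the arithmetic modulo $4$ leaves no slack.
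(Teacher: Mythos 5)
Your proposal is correct and takes essentially the same approach as the paper: the paper's gadget has the identical topology (two degree-3 hubs, each with two leaves, joined by a path of length 3 whose middle edge is the enforced one), pins one direction of that edge via exact no-waiting constraints, and breaks the even-$\Delta$ symmetry by a pair of asymmetric slack-one constraints whose feasible label sets intersect in a single value. The only difference is bookkeeping — the paper uses $D_{8,4}=3$ to pin one direction and the long constraints $D_{2,7}=D_{8,1}=6$ (combined with the hub relation at vertex 3) for the intersection, whereas you pin via the leaf-to-leaf constraint $D_{a,e}=5$ and intersect two short distance-2 constraints directly on the target edge.
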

The rough idea is that we need a break in symmetry. So we start with two copies of a known gadget (see \cref{thm:GadgetY}) that we merge at the edges for which we can enforce equality of the labels.
The resulting gadget would be infeasible, so we relax the constraints $D$ slightly, making them asymmetric, and thereby obtain a feasible gadget of constant size with the claimed property.
\begin{proof}
	The following gadget enforces $\lambda(4,5)=\lambda(5,4)$.
	It is shown in \cref{fig:gadget3}.
	On the path from vertex 8 to vertex 1, waiting time is allowed, but only at vertex 4.
	Symmetrically, there can be no waiting time on the path from 2 to 6 except at vertex 5.
	\begin{align*}
		V&=\{1,\mathellipsis,7 \}\\
		E&=\{(1,3),(3,1),(2,3),(3,2)\}
		\cup \{(6,7),(7,6),(6,8),(8,6)\}
		\\ &\cup \{(i,i+1)| i \in \{3,\mathellipsis, 5\}\} 
		\cup \{(i+1,i)| i \in \{3,\mathellipsis, 5\}\}
		\\D_{1,2}&=D_{2,1}=D_{7,8}=D_{8,7}=2
		\\D_{5,7}&=D_{4,1}=2
		\\D_{8,4}&=D_{2,5}=3
		\\D_{8,1}&=D_{2,7}=6
	\end{align*}
	All other values of $D$ are set to $\infty$. Let w.l.o.g. $\lambda(8,6)=1$.
	As no waiting is allowed at vertex $6$, it has a fixed arrival/departure time of $2$.
	This leads to $\lambda(5,6)=\lambda(7,6)=1$, $\lambda(6,5)=\lambda(6,7)=\lambda(6,8)=2$  and combined with $D_{8,4}=3$ to $\lambda(5,4)=3$.
	On the path from vertex $2$ to vertex $7$, waiting time is only allowed at vertex $5$ and must not exceed one time step.
	Thus, $\lambda(4,5) \in \{3,0\}$ and $\lambda(3,4) \in \{2,3\}$.
	Furthermore the following holds: $\lambda(4,3) \in \{0,1\}$.
	Because there can be no waiting time at vertex 3, it has a fixed arrival/departure time and $\lambda(3,4)-\lambda(4,3) \equiv 1 \pmod{\Delta}$.
	Therefore only $\lambda(4,3) =1$ and $\lambda(3,4) =2$ is feasible.
	Hence, $\lambda(4,5) = 3 = \lambda(5,4)$.
\end{proof}

\subsection{NP-Complete Cases with Quadratic Size Gadgets}
For the remaining cases with odd period $\Delta$, we have found a gadget of quadratic size $2 \cdot (1+\Delta \cdot (k+1))$, the shape of which is reminiscent of a comb.

\begin{restatable}{lem}{oddcomb}\label{thm:GadgetKu}
	There is a gadget $G_i$ that even with the limitation $D_{u,v}\geq \hat{d}(u,v)+k$ enforces $\lambda(v_1,v_2)=\lambda(v_2,v_1)$ for some pair of vertices $(v_1,v_2)$  as long as $\Delta \geq k +2 \land k\geq 1$ and period $\Delta$ is odd.
\end{restatable}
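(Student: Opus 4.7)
The plan is to construct, for odd $\Delta$ and $k\geq 1$ with $\Delta \geq k+2$, a \emph{comb}-shaped gadget consisting of two parallel bidirected spines of length $\Delta\cdot(k+1)$ joined by bidirected \emph{teeth} connecting $i$ to $\bar{i}$ for every position $i$ that is a multiple of $\Delta$. This matches the structure shown in \cref{fig:GadgetUEx} and gives the claimed size of $2\cdot(1+\Delta\cdot(k+1))$ vertices. The goal is to choose the entries of $D$ so that (i)~the instance is feasible, (ii)~along each spine the labels are essentially forced into a monotone pattern, and (iii)~the coupling induced by the teeth forces $\lambda(i,\bar{i})=\lambda(\bar{i},i)$ for at least one tooth position, which suffices to apply \cref{lemma:gadgetsRedu}.

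The first step is to enforce the spine-wise behavior. For each spine I would add a tight end-to-end constraint $D_{0,\Delta(k+1)}=\Delta(k+1)+k$ and the symmetric one in reverse, so at most $k$ units of waiting occur in a full spine traversal. In addition, for every block of length $\Delta$ between two consecutive tooth positions I would impose a similar tight constraint $D_{i\Delta,(i+1)\Delta}=\Delta+k$, forcing at most $k$ waits per block. Combined, the labels $\lambda(i,i+1)$ along a spine are forced (up to a global cyclic shift) into the pattern $\lambda(i,i+1)\equiv i + c \pmod\Delta$ with the $k$ allowed slack units distributed but controlled; the reverse labels are forced analogously. This is the same style of argument used in the proofs of \cref{thm:GadgetY} and \cref{thm:GadgetX}, and the restriction $\Delta\geq k+2$ guarantees that the intervals of admissible labels derived from the tight distance constraints are non-empty and non-overlapping.

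The second step is to couple the two spines via the teeth. For each tooth position $i$, I would add constraints such as $D_{i-1,\bar{i+1}} = 3+k$ and $D_{\bar{i-1},i+1}=3+k$, so that temporal paths crossing the tooth at $i$ in either direction must also respect the slack-$k$ budget. These constraints link the cyclic shifts of the two spines and, crucially, force at each tooth a local three-cycle-like condition on $\lambda(i,\bar{i})$ and $\lambda(\bar{i},i)$: their difference modulo $\Delta$ is forced to be both $x$ and $-x$ for the same value $x$, and because $\Delta$ is \emph{odd} the only solution is $x\equiv 0$, i.e.\ $\lambda(i,\bar{i})=\lambda(\bar{i},i)$. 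This parity use of oddness is exactly what we exploit in \cref{obs:exact}.

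The main obstacle will be calibrating $D$ so that feasibility is preserved despite the many tight constraints. I would first exhibit a canonical solution --- for instance $\lambda(i,i+1)=i\bmod\Delta$, $\lambda(i+1,i)=(-i)\bmod\Delta$, and $\lambda(i,\bar{i})=\lambda(\bar{i},i)=0$ at every tooth --- and verify that all prescribed $D$-values are met, so feasibility holds. I would then argue uniqueness (up to a global shift) by induction on the position along the spine: at each step, the next label is forced into an interval of length at most $1$ by the combination of the spine constraint for the current block and the tooth constraint at its endpoint. Once uniqueness up to a global shift is established for both spines and for each tooth, the required equality $\lambda(v_1,v_2)=\lambda(v_2,v_1)$ at the chosen pair $(v_1,v_2)$ follows immediately. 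The need to avoid \emph{unintended} shortcuts through successive teeth --- which could violate a spine constraint if a temporal path takes a detour --- is the most delicate bookkeeping step and will require noting that any tooth-detour lengthens the path by exactly $2$, hence is too long to beat the tight spine bound when $\Delta\geq k+2$.
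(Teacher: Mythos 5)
Your proposal has two genuine gaps, one structural and one in the forcing mechanism.

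First, the underlying graph you describe --- two parallel bidirected spines joined by teeth at every multiple of $\Delta$ --- is not a tree: any two consecutive rungs together with the spine segments between them form a cycle. Such a gadget cannot be plugged into \cref{lemma:gadgetsRedu}, which requires the result of inserting the gadget to remain a tree. The gadget in the paper is a caterpillar: a \emph{single} bidirected main path $0,1,\dots,(k+1)\Delta$ with a pendant leaf $\overline{i}$ attached at \emph{every} vertex $i$ (not only at multiples of $\Delta$), and the only finite entries of $D$ are $D_{\overline{i},\overline{j}}=\hat{d}(\overline{i},\overline{j})+k$ between all pairs of tooth tips.

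Second, and more fundamentally, your forcing argument is local, but under the standing restriction $D_{u,v}\geq\hat{d}(u,v)+k$ with $k\geq 1$ no single constraint can pin a label exactly: a constraint such as $D_{i-1,\overline{i+1}}=3+k$ still permits $k$ units of waiting on that short path, so the ``difference is both $x$ and $-x$, hence $0$'' argument (which is the zero-slack argument of \cref{obs:exact} and \cref{thm:GadgetY}) does not apply, and the claim that each step forces the next label into an interval of length at most $1$ is false --- each constraint leaves at least $k+1$ admissible values. What is actually needed is a \emph{global budget} argument: tracking the maximum accumulated waiting $w_\lambda^{\pm}(i)$ from/to all earlier tooth tips, one shows that the quantity $x_i=(\lambda(i-1,i)-\lambda(i,i-1))\bmod\Delta$ advances by $2$ per waiting-free step, and since $\Delta$ is odd, $2$ generates $\mathbb{Z}_\Delta$, so $x_i$ re-enters the critical window and forces an increase of $w_\lambda^{\pm}$ at least once every $\Delta$ main-path vertices. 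Over $(k+1)\Delta$ edges this exhausts the slack budget $k$ exactly, and the only way to delay the first forced increase to position $\Delta$ (rather than earlier) is to start with $\lambda(0,\overline{0})=\lambda(\overline{0},0)$. This is where oddness of $\Delta$ and the length $(k+1)\Delta$ are genuinely used; your construction uses oddness only in a local parity step that the slack $k\geq 1$ invalidates.
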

The gadget for $\Delta=3$, $k=1$ is shown in \cref{fig:GadgetUEx}. 

\begin{figure}[bt]
	\begin{minipage}[t]{1\textwidth}
		\centering
		\includegraphics[scale=0.7]{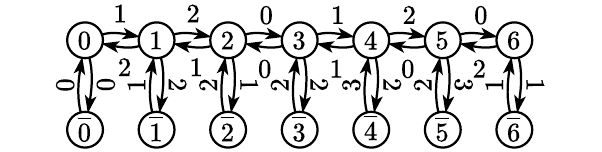}
		\caption{The gadget for $\Delta=3$, $k=1$ which enforces $\lambda(0,\overline{0})=\lambda(\overline{0},0)$, $\lambda(3,\overline{3})=\lambda(\overline{3},3)$ and $\lambda(6,\overline{6})=\lambda(\overline{6},6)$.}
		\label{fig:GadgetUEx}
	\end{minipage}
\end{figure}

\begin{proof}
	
	We show the result for $k=\Delta-2$. 
	This inherits to all smaller $k$ since the limitation for those is weaker and thus complements to \cref{thm:Always} for odd $\Delta$.
	
	The following gadget enforces $\lambda(0,\overline{0})=\lambda(\overline{0},0)$ as shown in \cref{fig:GadgetUkw}:
	\begin{align*}
		V&=\{0,\mathellipsis,(k+1)\Delta, \overline{0},\mathellipsis,\overline{(k+1)\Delta}  \}\\
		E&=\{(i-1,i),(i,i-1)\mid 1\le i\le (k+1)\Delta\}
		\\ &\cup \{(i,\overline{i}),(\overline{i},i)\mid 0\le i\le (k+1)\Delta\}
		\\D_{\overline{i},\overline{j}}&=\hat{d}(\overline{i},\overline{j})+k\mbox{ for all } 0\le i,j \le (k+1)\Delta
	\end{align*}
	
	\begin{figure}[b]
		\begin{minipage}[b]{1\textwidth}
			\includegraphics[width=\linewidth]{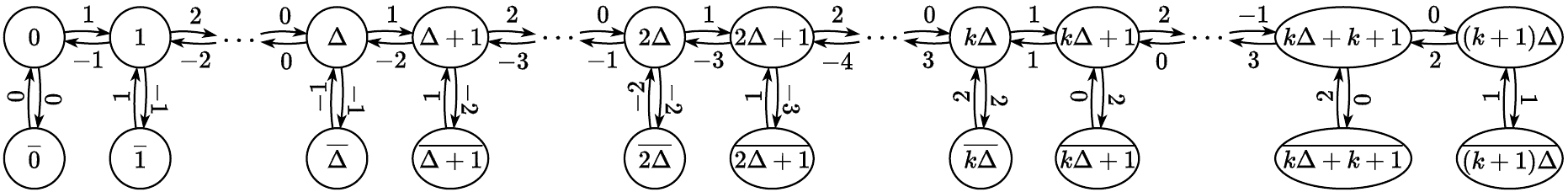}
			\caption{The gadget $(G=(V,E),D, \Delta)$ with a feasible labeling for any odd value of $\Delta$.}
			\label{fig:GadgetUkw}
		\end{minipage}
	\end{figure}
	All other values of $D$ are set to $\infty$. In the following,  we will call the part that consists of the vertices $\{0,\mathellipsis,(k+1)\Delta\}$ the main path. 
	We call the direction from $0$ to $(k+1)\Delta$ the \emph{forward direction}.
	The reverse direction is called the \emph{backward direction}.
	
	We investigate the increase of the maximum waiting time in forward direction to (respectively in backward direction from) $i$ defined by
	\begin{align*}
		w_{\lambda}^+(i)&:=\max_{j < i}\{d(\overline{j},i)-\hat{d}(\overline{j},i)\}\mbox{ and analogously}\\
		w_{\lambda}^-(i)&:=\max_{j < i}{\{d(i,\overline{j})-\hat{d}(i,\overline{j})\}}.	
	\end{align*}
	We show that both $w_{\lambda}^+$ and $w_{\lambda}^-$ have to increase after every $\Delta$ edges on the main path.
	This means there is waiting time in one direction on the main path at all vertices $i\cdot \Delta$.
	
	We have $w_{\lambda}^+(0)=w_{\lambda}^-(0)=-\infty$ since there is no $j<i$.
	We can easily choose $\lambda$ such that $w_{\lambda}^+(1)=w_{\lambda}^-(1)=0$.
	An equivalent inductive definition is 
	\begin{align*}
		w_{\lambda}^+(i+1)=\max\{&
		w_{\lambda}^+(i)+(\lambda(i,i+1)-\lambda(i-1,i)-1) \bmod \Delta,\\&
		(\lambda(i,i+1)-\lambda(\overline{i},i)-1) \bmod \Delta\} \mbox{ 
			and analogously }\\
		w_{\lambda}^-(i+1)=\max\{&
		w_{\lambda}^-(i)+(\lambda(i,i-1)-\lambda(i+1,i)-1) \bmod \Delta,\\&
		(\lambda(i,\overline{i})-\lambda(i+1,i)-1) \bmod \Delta\}.
	\end{align*}
	For the case that there is no waiting at $i$ on the main path in either direction,
	this is simplified to
	\begin{align*}
		w_{\lambda}^+(i+1)&=\max\{w_{\lambda}^+(i),
		(\lambda(i,i+1)-\lambda(\overline{i},i)-1) \bmod \Delta\} \mbox{ 
			and analogously }\\
		w_{\lambda}^-(i+1)&=\max\{w_{\lambda}^-(i),
		(\lambda(i,\overline{i})-\lambda(i+1,i)-1) \bmod \Delta\}.
	\end{align*}
	
	\begin{figure}[t]
		
		\begin{minipage}[t]{0.45\textwidth}
			\centering
			\includegraphics[scale=0.65]{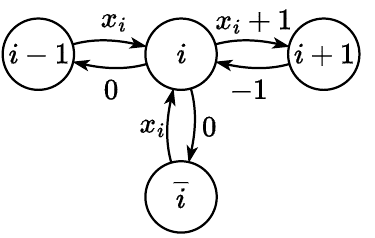}
			\caption{Let w.l.o.g.  $\lambda(i,i-1)=0$. For the case $x_i > w_{\lambda}^+(i)$ we get $w_{\lambda}^-(i+1)=w_{\lambda}^-(i)$.}
			\label{fig:Zinken1}
		\end{minipage}
		\hfill
		\begin{minipage}[t]{0.45\textwidth}
			\centering
			\includegraphics[scale=0.65]{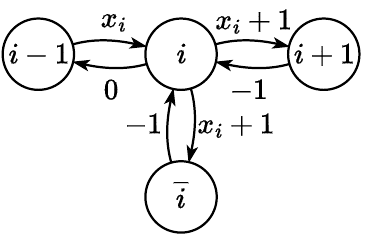}
			\caption{Let w.l.o.g.  $\lambda(i,i-1)=0$. For the case $x_i \le w_{\lambda}^+(i)$ we get $w_{\lambda}^-(i+1)=\max\{w_{\lambda}^-(i),x_i+1\}$.}
			\label{fig:Zinken2}
		\end{minipage}
	\end{figure}
	
	We want to have each increase of the maximum waiting time on the main path as late as possible in forward direction. 
	We first derive conditions under which $ w_{\lambda}^-(i)$ respectively  $w_{\lambda}^+(i)$ must increase. We then show that there is a solution in which we have the increase as late as possible and in which the waiting time just does not exceed $k$.
	At every increase, we have to wait in one direction on the main path.
	As the gadget is symmetrical, an earlier increase is also not possible, as that would mean waiting on the main path at a later point viewed from the other side.
	No increase, that means $w_{\lambda}^-(i+1)=w_{\lambda}^-(i)$, requires that there is no waiting at vertex $i$ on the main path, i.e. $\lambda(i,i-1)\equiv\lambda(i+1,i)+1 \pmod{\Delta}$ (see \cref{fig:Zinken1} and \cref{fig:Zinken2}).
	However, there can be waiting at vertex $i$ on the path from $i+1$ to $\overline{i}$ as long as it does not exceed the previous maximum waiting time. 
	Note that we can wait at $i$ on the way from $i-1$ to $\overline{i}$ at most $k-w_{\lambda}^+(i)$ times, otherwise we would wait on the way from some $\overline{j}$ to $\overline{i}$ more than $k$ times.
	This means $\lambda(i,\overline{i})-(\lambda(i-1,i)+1)\leq k-w_\lambda^+(i)$.
	Let $x_i=(\lambda(i-1,i)-\lambda(i,i-1)) \bmod \Delta$.
	We will discuss two cases: $ x_i> w_{\lambda}^+(i)$ and $x_i \leq w_{\lambda}^+(i)$.
	
	The first case allows us to set $\lambda(i,\overline{i})=(\lambda(i+1,i)+1)\bmod \Delta=\lambda(i,i-1)$ as in \cref{fig:Zinken1}
	since $(\lambda(i,i-1)-(\lambda(i-1,i)+1))\bmod \Delta= (-x_i-1)\bmod \Delta \le \Delta-2-w_{\lambda}^+(i) = k-w_{\lambda}^+(i)$.
	This means there is no waiting at vertex $i$ on the path from $i+1$ to $\overline{i}$. 
	We could also wastefully set $\lambda(i,\overline{i})$ to any value that agrees with both constraints (waiting from $i-1$ and waiting from $i+1$).
	In any case, there is no increase of the maximum waiting time in backwards direction, i.e. $w_{\lambda}^-(i+1)=w_{\lambda}^-(i)$.
	
	In the other case, the smallest timestamp  we can assign to $(i,\overline{i})$ is $(\lambda(i-1,i) +1) \bmod \Delta$ (see \cref{fig:Zinken2}) which leads to the smallest possible waiting time to $\overline{i}$.
	This enforces $w_{\lambda}^-(i+1)>w_{\lambda}^-(i)$ only if the waiting time  exceeds the previous maximum waiting time, i.e.\
	$(\lambda(i,\overline{i})-(\lambda(i+1,i)+1))\bmod \Delta = (\lambda(i-1,i)-\lambda(i+1,i))  \bmod \Delta =x_i +1 \bmod \Delta> w_{\lambda}^-(i)$.
	Therefore, there is an increase with $x_i \geq w_\lambda^-(i)$.

	This means that with $\lambda(i,i-1)\equiv\lambda(i+1,i)+1 \pmod{\Delta}$ an increase of the waiting time $w_{\lambda}^-(i+1)>w_{\lambda}^-(i)$ is enforced if and only if
	$w_{\lambda}^+(i) \ge x_i  \ge w_{\lambda}^-(i)$
	and analogously $w_{\lambda}^+(i+1)>w_{\lambda}^+(i)$ is enforced if and only if
	$w_{\lambda}^-(i) \ge x_i  \ge w_{\lambda}^+(i)$.
	
	Starting as in \cref{fig:GadgetUkw} with $\lambda(0,\overline{0})=\lambda(\overline{0},0)$ and no waiting at vertex 0,
	we get $w_{\lambda}^-(i) = w_{\lambda}^+(i)=0$ for $1\le i\le \Delta$ (since $x_i=2i \pmod{\Delta}$ assumes all values $< \Delta$ once), and only at vertex $\Delta$,
	we get $\lambda(\Delta-1,\Delta)=\lambda(\Delta,\Delta-1)$ which means $x_{\Delta}=0$ enforcing an increase of the waiting time to $w_{\lambda}^-(\Delta+1)=w_{\lambda}^+(\Delta+1)=1$.
	If we would instead start with $\lambda(0,\overline{0})\not=\lambda(\overline{0},0)$ then we would get $x_i=0$ already for an $i< \Delta$ leading to an earlier increase.
	
	\begin{figure}[bt]
		\begin{minipage}[b]{1\textwidth}
			\centering
			\includegraphics[scale=0.65]{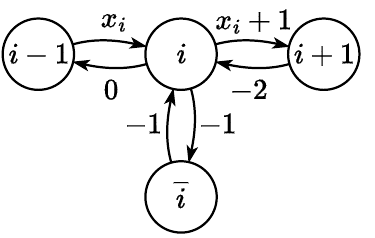}
			\caption{Let w.l.o.g.  $\lambda(i,i-1)=0$. For the case $x_i = w_{\lambda}^-(i)= w_{\lambda}^+(i)$ we choose $\lambda(i+1,i)$ such that $w_{\lambda}^-(i+1)=w_{\lambda}^+(i+1)=w_{\lambda}^-(i)+1$ and in this way the value of $x_{i+1}$ is increased by 3 instead of 2 relative to $x_i$.
				This means that for the following $\Delta-1$ vertices, the value $x_j$ will run through all other values modulo $\Delta$, before it becomes $w_\lambda^-(j)$ again.}
			\label{fig:Zinken3}
		\end{minipage}
	\end{figure}
	
	In fact, we can set $\lambda(\Delta,\Delta+1)=(\lambda(\Delta-1,\Delta)+1)  \bmod \Delta$ and 
	$\lambda(\Delta+1,\Delta)=(\lambda(\Delta,\Delta-1)-2) \bmod \Delta$ as in  \cref{fig:Zinken3}.
	We then set 		$\lambda(\Delta,\overline{\Delta})=\lambda(\overline{\Delta},\Delta)=
	(\lambda(\Delta,\Delta-1)-1)  \bmod \Delta$ and in this way get
	$w_{\lambda}^-(i) = w_{\lambda}^+(i)=1$ for $\Delta < i\le 2\Delta$ and only at vertex $2\Delta$,
	we get $\lambda(2\Delta-1,2\Delta)\equiv\lambda(2\Delta,2\Delta-1)+1 \pmod{\Delta}$ enforcing an increase of the waiting time to $w_{\lambda}^+(2\Delta+1)=w_{\lambda}^+(2\Delta+1)=2$.
	
	By induction on $j$, we can then set $\lambda(j\Delta,j\Delta+1)=(\lambda(j\Delta-1,j\Delta)+1)  \bmod \Delta$ and
	$\lambda(j\Delta+1,j\Delta)=(\lambda(j\Delta,j\Delta-1)-2)  \bmod \Delta$ as well as
	$\lambda(j\Delta,\overline{j\Delta})=\lambda(\overline{j\Delta},j\Delta)=
	(\lambda(j\Delta,j\Delta-1)-j)  \bmod \Delta$.
	This way we get
	$w_{\lambda}^-(i) = w_{\lambda}^+(i)=j$ for $j\Delta < i\le (j+1)\Delta$ and only at vertex $(j+1)\Delta$,
	we get $x_{(j+1)\Delta}=j$ enforcing an increase of the waiting time to $w_{\lambda}^-(j\Delta+1)=w_{\lambda}^+(j\Delta+1)=j$. 
	The increase just reaches vertex $(k+1)\Delta$ at the end of the gadget with the allowed waiting time of $k$, where there is no further increase as it is the last vertex.
	This is accomplished by waiting in one of the two directions on the main path at vertices $i\cdot\Delta$ for $1 \le i \le k$.
	
	However, if we do not wait in one of the two directions on the main path in one of these $k$ cases as in \cref{fig:Zinken3}
	but instead continue as in \cref{fig:Zinken2}, the increase of $w_\lambda^+$ and $w_\lambda^-$ would already be enforced the next time on a position $\frac{\Delta+1}{2}$ later, as shown in \cref{fig:GadgetU0w}.
	In the figure, this position would be $\Delta+\frac{\Delta+1}{2}=\Delta+h+1$.
	Nevertheless, the increase would be enforced after every $\Delta$ steps from this point on, which would lead to a waiting time of $k+1$ before the end of the gadget.
	
\begin{figure}[tb]
	\begin{minipage}[b]{1\textwidth}
		\includegraphics[width=\linewidth]{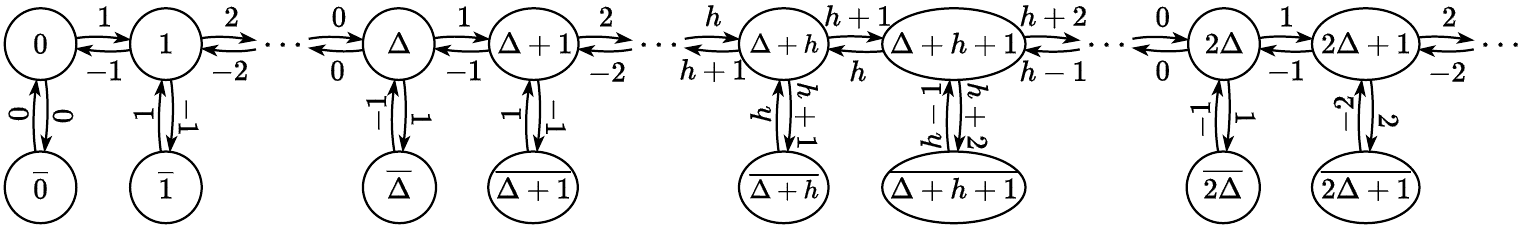}
		\caption{If we do not wait at vertex $\Delta$ in one direction on the main path, we get $w_\lambda^+(\Delta+h+1)=w_\lambda^+(\Delta+h+1)=1$ for $h:=\frac{\Delta-1}{2}$ leading to an increase $w_\lambda^+(\Delta+h+2)=w_\lambda^+(\Delta+h+2)=2$ at vertex $\Delta+h+1$ already.}
		\label{fig:GadgetU0w}
	\end{minipage}
\end{figure}
	
	\begin{figure}[bt]
		\begin{minipage}[b]{1\textwidth}
			\includegraphics[width=\linewidth]{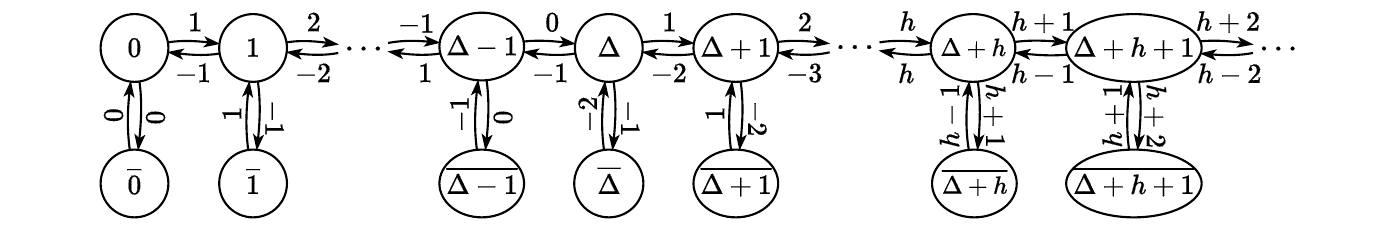}
			\caption{Waiting already at $\Delta-1$ leads to $w_\lambda^+(\Delta)=0$ and $w_\lambda^-(\Delta)=1=x_{\Delta}$ for $h:=\frac{\Delta-1}{2}$ immediately leading to $w_\lambda^-(\Delta+1)=1$
				and even $w_\lambda^+(\Delta+1)=2$
			}
			\label{fig:GadgetUew}
		\end{minipage}
	\end{figure}
	
	Conversely, if we wait earlier than necessary in either direction on the main path,		
	this would mean an increase later than possible in the symmetric case.
	Another way to see this is shown in \cref{fig:GadgetUew}. 
	Waiting early forces an increase at the next multiple of $\Delta$ anyway.
	This means that $\lambda(0,\overline{0})=\lambda(\overline{0},0)$ and symmetrically
	$\lambda((k+1)\Delta,\overline{(k+1)\Delta})=\lambda(\overline{(k+1)\Delta},(k+1)\Delta)$ is enforced in the gadget.
\end{proof}

\begin{figure}[tb]
	\begin{minipage}[b]{1\textwidth}
		\centering
		\includegraphics[scale=0.65]{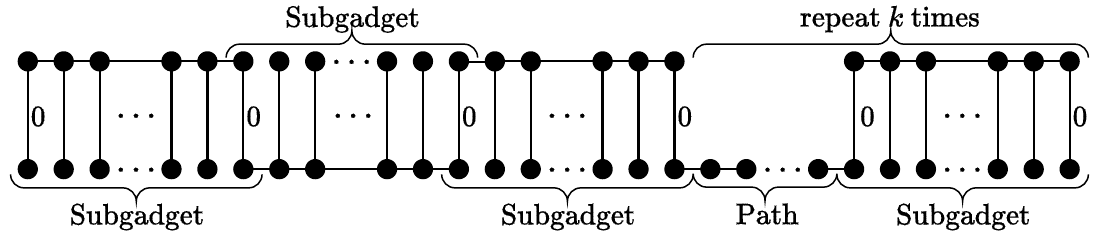}
		\caption{The gadget consists of $\Delta$ subgadgets, where the second is upside down and overlaps wit the first and the third at one edge and the others are connected by a path of length $\Delta-2$. The waiting time between two vertices within a subgadget is limited to $k$.}
		\label{fig:GadgetGkw}
	\end{minipage}
\end{figure}

\begin{figure}[bht]
	\begin{subfigure}[b]{\textwidth}
		\centering
		\includegraphics[width=\linewidth]{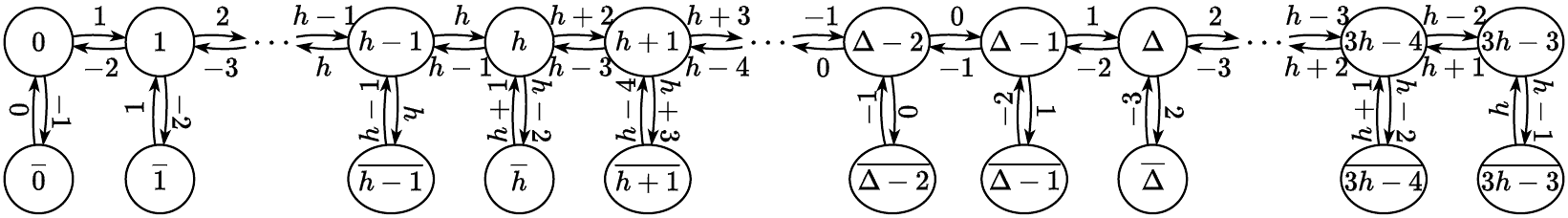}
		\caption{Realization for the gadget from \cref{thm:GadgetKu} for even $\Delta$ starting with $\lambda(0,\overline{0})=\lambda(\overline{0},0)+1$.}
		\label{fig:GadgetGw-a}
	\end{subfigure}
	
	\begin{subfigure}[t]{\textwidth}
		\centering
		\includegraphics[width=\linewidth]{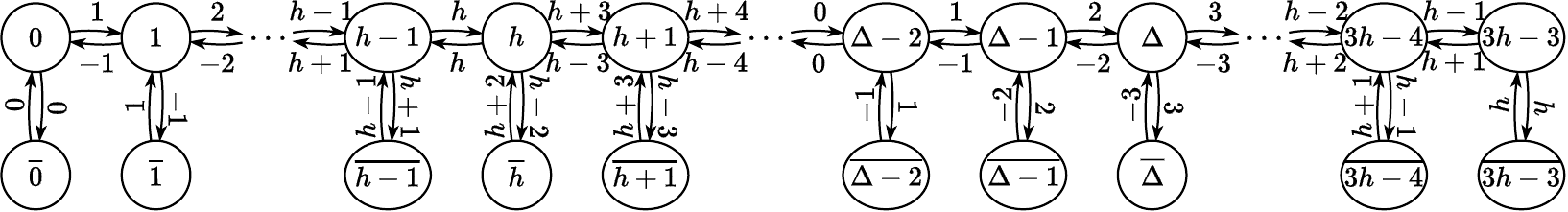}
		\caption{Realization for the gadget from \cref{thm:GadgetKu} for even $\Delta$ starting with $\lambda(0,\overline{0})=\lambda(\overline{0},0)$.}
		\label{fig:GadgetGw-b}
	\end{subfigure}
	\caption{Two realizations for the gadget from \cref{thm:GadgetKu} for even $\Delta$. 
		It turns out that starting the labeling with $\lambda(0,\overline{0})=\lambda(\overline{0},0)+1$ requires waiting one time at some vertex (here $h$) in both directions (see \cref{fig:GadgetGw-a}) but that starting the labeling with $\lambda(0,\overline{0})=\lambda(\overline{0},0)$ requires waiting even two times at some vertex (here $h$) in both directions (see \cref{fig:GadgetGw-b}). 
		Therefore we cannot use this gadget to enforce $\lambda(0,\overline{0})=\lambda(\overline{0},0)$ for even $\Delta$.}
	\label{fig:GadgetGw}
\end{figure}

Finally, we have constructed a similar gadget for the remaining cases with even $\Delta$.

\begin{restatable}{lem}{evencomb}\label{thm:GadgetKg}
	There is a gadget $G_i$ that even with the limitation $D_{u,v}\geq \hat{d}(u,v)+k$ enforces $\lambda(e_1,e_2)=\lambda(e_2,e_1)$ for some pair of vertices $(e_1,e_2)$  as long as $\Delta \geq k +3 \land k\geq 1$ and period~$\Delta$ is even.
\end{restatable}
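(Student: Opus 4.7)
The plan is to build on the comb construction from \cref{thm:GadgetKu}, adjusted to compensate for the parity mismatch that arises for even $\Delta$. As \cref{fig:GadgetGw} shows, a direct reuse of the odd-$\Delta$ comb fails: starting with $\lambda(0,\overline{0})=\lambda(\overline{0},0)+1$ forces a single wait in each direction at some middle tooth, while starting with $\lambda(0,\overline{0})=\lambda(\overline{0},0)$ forces two waits. Neither singles out the equality labeling as the unique solution within a slack budget of $k$, so the plain comb does not enforce $\lambda(e_1,e_2)=\lambda(e_2,e_1)$ when $\Delta$ is even. The weaker hypothesis $\Delta\geq k+3$ (instead of $\Delta\geq k+2$) gives us precisely one additional unit of slack, which we will spend to absorb the parity defect.

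First I would specify the gadget: take the same comb $(V,E,D)$ as in \cref{thm:GadgetKu}, but extend the main path by one extra segment, add a corresponding additional tooth, and keep the tight distance constraints $D_{\overline{i},\overline{j}}=\hat{d}(\overline{i},\overline{j})+k$ on all pairs of tooth endpoints. The pair $(e_1,e_2)$ for which equality will be enforced is chosen as the two endpoints of the newly inserted tooth; intuitively, the extra segment supplies the parity flip that is missing in \cref{fig:GadgetGw-a} and \cref{fig:GadgetGw-b}.

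Second, I would reuse the inductive waiting-time analysis of \cref{thm:GadgetKu} essentially verbatim, tracking $w_\lambda^+(i)$ and $w_\lambda^-(i)$ along the extended main path. The same dichotomy between the ``no-wait'' labeling depicted in \cref{fig:Zinken1} and \cref{fig:Zinken2} and the ``forced-increment'' labeling of \cref{fig:Zinken3} continues to apply, so that across $k+1$ blocks of $\Delta$ main-path edges the accumulated forward and backward waiting times must each grow by exactly $k$. The key calculation is to show that, with the extra segment included, the parity of $x_i=(\lambda(i-1,i)-\lambda(i,i-1))\bmod\Delta$ at the designated junction aligns correctly only when $\lambda(e_1,e_2)=\lambda(e_2,e_1)$; any other starting choice triggers an overshoot of the type exhibited in \cref{fig:GadgetU0w} and \cref{fig:GadgetUew}, violating the budget.

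The main obstacle will be verifying that the parity flip introduced by the extra segment actually rules out the asymmetric starting option of \cref{fig:GadgetGw-a}, rather than merely shifting it by one tooth. This reduces to a case analysis around the junction that mirrors the ``$\frac{\Delta+1}{2}$-steps-later'' argument at the end of the proof of \cref{thm:GadgetKu}, but executed symmetrically from both ends and with the even parity of $\Delta$ taken into account. The expected outcome is that only the symmetric choice $\lambda(e_1,e_2)=\lambda(e_2,e_1)$ avoids simultaneously exceeding the waiting budget in both directions, which together with \cref{lemma:gadgetsRedu} will complete the remaining NP-hardness cases in \cref{fig:hardness}.
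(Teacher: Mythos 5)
Your proposal has a genuine gap at exactly the point you flag as ``the main obstacle,'' and that obstacle is not surmountable by the construction you describe. The reason the odd-$\Delta$ comb fails for even $\Delta$ is not a positional misalignment that one extra main-path segment and one extra tooth could shift back into place; it is a cost asymmetry. As \cref{fig:GadgetGw} shows, within a single comb the asymmetric start $\lambda(0,\overline{0})=\lambda(\overline{0},0)+1$ forces only one wait per direction where the symmetric start forces two, because for even $\Delta$ the quantity $x_i=(\lambda(i-1,i)-\lambda(i,i-1))\bmod\Delta$ evolves by $+2$ per step and therefore only sweeps the residues of one parity class: starting from $x_0=0$ it returns to the forbidden window after only $\Delta/2$ steps, while starting from $x_0=1$ it avoids it for a full $\Delta$ steps. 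Appending one more segment and tooth leaves this per-block cost ratio unchanged, so any waiting-budget argument on a single (extended) comb will continue to \emph{prefer} the asymmetric labeling rather than exclude it. Your sketch never supplies a mechanism that makes the symmetric labeling the cheaper one.

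The paper resolves this with a structurally different, two-level construction: it chains $\Delta$ comb subgadgets (each of $2\Delta$ vertices, each carrying its own tight constraints $D_{\overline{i},\overline{j}}=\hat d(\overline{i},\overline{j})+k$ so that each subgadget has a local budget of $k$), joined by $k$ connecting paths of $\Delta-2$ edges, and adds one global constraint $D_{\overline{0},\overline{\ell}}=\hat d(\overline{0},\overline{\ell})+k$ bounding the total slack across the whole chain. The key claim is that each subgadget either has equal labels in both directions on \emph{both} of its end teeth, or must wait at least twice on its main path; and that propagating the symmetric end-labeling across a connecting path costs exactly one wait in each direction. With $k$ connecting paths the global budget of $k$ is fully consumed on those paths, leaving zero slack for the first subgadget, which by the claim forces $\lambda(0,\overline{0})=\lambda(\overline{0},0)$ there. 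To repair your write-up you would need to replace the single extended comb with some such chaining device (or another mechanism that penalizes the asymmetric start); the inductive $w_\lambda^{\pm}$ analysis you propose to reuse is indeed reusable, but only inside each subgadget, and with the window conditions adjusted from $\Delta-k=2$ to $\Delta-k=3$.
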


The gadget is shown in \cref{fig:GadgetGkw}. 	
We show that there cannot be any waiting time in the first subgadget, since the remaining subgadgets and paths already enforce a total waiting time of~$k$ by investigating again the increase in maximum waiting time on the main path to and from some vertex $i$.

\begin{proof}
	For even $\Delta$ the same gadget as in \cref{thm:GadgetKu} does not enforce $\lambda(0,\overline{0})=\lambda(\overline{0},0)$. 
	In fact, starting with $\lambda(0,\overline{0})=\lambda(\overline{0},0)$ requires more waiting than starting with $\lambda(0,\overline{0})\neq\lambda(\overline{0},0)$ as can be seen in \cref{fig:GadgetGw}.
	The comparison of both solutions favors a labeling that does not meet our requirements.
	For this reason the following construction uses a more complex gadget as sketched in \cref{fig:GadgetGkw}.
	We show the result for $k=\Delta-3$. 
	This inherits to all smaller $k$ since the limitation for those is weaker and thus complements to \cref{thm:Always} for even $\Delta$.

	Let $h:= \Delta/2$ and $\ell=3\Delta-3+(2\Delta-3)(\Delta-3)$. 
	The following gadget of length $\hat{d}(\overline{0},\overline{\ell})=\ell+2(\Delta-1)$ as depicted in \cref{fig:GadgetGkw} enforces $\lambda(0,\overline{0})=\lambda(\overline{0},0)$:
	\begin{align*}
		V&=\{0,\mathellipsis,\ell,\overline{0},\mathellipsis,\overline{\ell}  \}\\
		E&=\{(i-1,i),(i,i-1)&&\mid 1\le i\le \Delta -1\}
		\\ &\cup \{(i,\overline{i}),(\overline{i},i)&&\mid 0\le i\le \Delta -1\}
		\\ &\cup\{(\overline{i-1},\overline{i}),(\overline{i},\overline{i-1})&&\mid \Delta \le i\le 2\Delta-2\}
		\\ &\cup \{(i,\overline{i}),(\overline{i},i)&&\mid \Delta -1\le i\le 2\Delta-2\}
		\\ &\cup \{(i-1,i),(i,i-1)&&\mid 2\Delta-1+j(2\Delta-3)\le i\le 3\Delta-3+j(2\Delta-3),
		\\&&& 0 \le j \le \Delta -3\}
		\\ &\cup \{(i,\overline{i}),(\overline{i},i)&&\mid 2\Delta-2+j(2\Delta-3)\le i\le 3\Delta-3+j(2\Delta-3),
		\\&&& 0 \le j \le \Delta -3
		\\ &\cup \{(\overline{i-1},\overline{i}),(\overline{i},\overline{i-1})&&\mid 3\Delta-2+j(2\Delta-3) \le i\le 4\Delta-5+j(2\Delta-3),
		\\&&& 0 \le j < \Delta -4\}
		\\D_{\overline{i},\overline{j}}&=\hat{d}(\overline{i},\overline{j})+k&&\mbox{ for all } 0\le i,j \le \Delta -1
		\\D_{i,j}&=\hat{d}(i,j)+k&&\mbox{ for all } \Delta -1 \le i,j \le 2\Delta-2
		\\D_{\overline{i},\overline{j}}&=\hat{d}(\overline{i},\overline{j})+k&&\mbox{ for all } 2\Delta-2+j(2\Delta-3)\le i,j \le 3\Delta-3+p(2\Delta-3),
		\\&&& 0 \le p < \Delta -3
		\\D_{\overline{0},\overline{\ell}}&=\hat{d}(\overline{0},\overline{\ell})+k
	\end{align*}
	
	All other values of $D$ are set to $\infty$. For simplicity we give the vertex set as $\{0,\mathellipsis,\ell,\overline{0},\mathellipsis,\overline{\ell}  \}$ instead of explicitly removing the isolated vertices.
	We show that there is a labeling for the gadget in which the waiting on the main path takes place exclusively in both directions of the $k$ connecting paths instead of on the subgadgets corresponding to the canonical gadget in \cref{thm:GadgetKu}.
	Furthermore any $\lambda$ has to wait at least two times in any direction in the context (the path or the neighboring subgadgets) of each connecting path.
	This leaves no waiting time for the first subgadget, which enforces $\lambda(0,\overline{0})=\lambda(\overline{0},0)$.

	First we examine the waiting times for each subgadget individually before we investigate them in the context of the whole gadget.
	In the solution without waiting on the main paths of the subgadgets the labeling is the same for each subgadget.
	\begin{claim}\label{claimEven}
		Each subgadget consisting of a comb of $2\Delta$ vertices enforces either $\lambda(0,\overline{0})=\lambda(\overline{0},0)=\lambda(\Delta-1,\overline{\Delta-1})=\lambda(\overline{\Delta-1},\Delta-1)$ or requires waiting on the main path at least two times in the total of both directions.
	\end{claim}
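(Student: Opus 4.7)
My plan is to adapt the forward/backward waiting-time framework from the proof of \cref{thm:GadgetKu} to a single subgadget on $2\Delta$ vertices, that is, a main path $0, 1, \ldots, \Delta-1$ with pendant leaves $\overline{0}, \ldots, \overline{\Delta-1}$. I will carry over the inductive definitions of $w_\lambda^+(i)$, $w_\lambda^-(i)$, and $x_i = (\lambda(i-1,i) - \lambda(i,i-1)) \bmod \Delta$, together with the inductive recurrences describing how $w_\lambda^{\pm}$ evolve depending on whether or not vertex $i$ incurs a main-path wait. The distance constraints $D_{\overline{i},\overline{j}} = \hat{d}(\overline{i},\overline{j}) + k$ with $k = \Delta-3$ then translate into the global bounds $w_\lambda^{\pm}(i) \le k$ for every $i$, and the symmetry of the subgadget under swapping the endpoints $(0,\overline{0}) \leftrightarrow (\Delta-1,\overline{\Delta-1})$ and forward $\leftrightarrow$ backward directions lets me work on one side and transfer the conclusion.

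Next I will exploit a parity feature specific to even $\Delta$. At a no-wait vertex $i$ the identity $\lambda(i,i-1) \equiv \lambda(i+1,i)+1 \pmod{\Delta}$ forces $x_{i+1} \equiv x_i + 2 \pmod{\Delta}$, so a stretch of no-wait vertices confines the $x_i$-sequence to a single parity class modulo $\Delta$. This is the key asymmetry with the odd case analyzed in \cref{thm:GadgetKu}: the sequence no longer visits every residue, and in particular $x_{i+h} \equiv x_i \pmod{\Delta}$ for $h := \Delta/2$. I will use this to show that with zero main-path waits, starting from $\lambda(0,\overline{0}) = \lambda(\overline{0},0)$ keeps $w_\lambda^+ = w_\lambda^- = 0$ throughout the subgadget and propagates the equality to the other endpoint $\lambda(\Delta-1,\overline{\Delta-1}) = \lambda(\overline{\Delta-1},\Delta-1)$ by mirror symmetry of the resulting labeling, whereas any other starting configuration produces a forced increase of $w_\lambda^+$ or $w_\lambda^-$ at some vertex strictly inside the subgadget that exceeds the budget~$k$.

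The more delicate step, which I expect to be the main obstacle, is the case of exactly one wait on the main path. A single wait shifts $x_{i+1}$ by $\pm 1$ instead of by $+2$ modulo $\Delta$, so it switches the parity class of the $x_i$-sequence from that point on. I will argue that, although such a wait can partly compensate for an unequal starting pair $\lambda(0,\overline{0}) \ne \lambda(\overline{0},0)$ in one direction, the opposite-direction recurrence is then forced past the budget unless the corresponding end-of-subgadget equality also fails in a matching asymmetric way. A finite case distinction on the position and direction of the single wait, combined with the parities of the two endpoint differences $(\lambda(0,\overline{0}) - \lambda(\overline{0},0))$ and $(\lambda(\Delta-1,\overline{\Delta-1}) - \lambda(\overline{\Delta-1},\Delta-1))$ modulo $\Delta$, will show that one wait together with any violation of one of the four endpoint equalities pushes $w_\lambda^+(\Delta-1)$ or $w_\lambda^-(\Delta-1)$ above $k$. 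Taking the contrapositive of the zero-wait and one-wait cases together then establishes the claim.
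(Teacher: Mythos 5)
There is a genuine gap in your analysis of the zero-wait case, and it propagates into the one-wait case. You claim that with no main-path waits and an equal start $\lambda(0,\overline{0})=\lambda(\overline{0},0)$ the quantities $w_\lambda^+$ and $w_\lambda^-$ stay $0$ throughout the subgadget. That is false: these quantities measure waiting on paths from/to the \emph{leaves} $\overline{j}$, not only main-path waiting, and because $\Delta$ is even the sequence $x_i=2i\bmod\Delta$ returns to $0$ already at the midpoint $h=\Delta/2$, which forces an increase there even with an equal start and no main-path waiting. Worse, once the first increase occurs, $x_{h+j}$ lands back inside the forcing window at every subsequent vertex, so the increases cascade, growing $w_\lambda^\pm$ by two per step and reaching \emph{exactly} the budget $k=\Delta-3$ at vertex $\Delta-1$. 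The correct picture is therefore not ``equal start $\Rightarrow$ zero waiting'' but ``equal start delays the first forced increase as long as possible, and the ensuing cascade consumes the entire budget with nothing to spare.'' Any unequal start with even difference triggers $x_i=0$ earlier and overshoots $k$; an odd difference at best (with $x_0=1$) produces a slower cascade that still reaches $k+1$. Without this tightness argument your contrapositive has no quantitative force.

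The second missing ingredient is the reason a \emph{single} main-path wait cannot rescue an unequal start. Because $\Delta-k=3$ here (versus $2$ in the odd-$\Delta$ gadget), the condition forcing an increase is the \emph{widened} window $w_\lambda^-(i)\le x_i\le w_\lambda^+(i)+1$ (and symmetrically), so shifting $x$ by $\pm1$ via one wait still leaves $x_{h+j}=w_\lambda^+(h+j)+1$ inside the window and the cascade proceeds unabated; in the odd-difference case one wait merely flips the parity of the endpoint difference at $\Delta-1$, after which the even-difference argument applies by the end-to-end symmetry of the subgadget. Your proposed ``finite case distinction on the position and direction of the single wait'' does not identify this window-width mechanism, and combined with the incorrect zero-wait baseline it would not yield the conclusion. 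Your parity observation ($x_{i+1}\equiv x_i+2$ at no-wait vertices, hence confinement to one residue class) is correct and is indeed the right starting point, but the proof needs the cascade and the exact budget count on top of it.
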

	\begin{claimproof}
		We define the waiting time $w_\lambda^+$ and $w_\lambda^-$ as before in \cref{thm:GadgetKu} and make  similar conclusions which differ in the following statements because $\Delta-k$ is now 3 instead of 2:
		We still have $\lambda(i,\overline{i})-(\lambda(i-1,i)+1)\leq k-w_\lambda^+$ and now
		discuss the two cases $ x_i> w_{\lambda}^+(i)+1$ and $x_i \leq w_{\lambda}^+(i)+1$.
		Again, the first case allows us to set $\lambda(i,\overline{i})=(\lambda(i+1,i)+1)\bmod \Delta=\lambda(i,i-1)$ (see \cref{fig:Zinken1})
		since $(\lambda(i,i-1)-\lambda(i-1,i)-1)\bmod \Delta= (-x_i-1)\bmod \Delta \le \Delta-3-w_{\lambda}^+(i) = k-w_{\lambda}^+(i)$ with no increase of the waiting time in backwards direction, i.e. $w_{\lambda}^-(i+1)=w_{\lambda}^-(i)$.
		In the other case, the smallest timestamp time we can assign $(i,\overline{i})$ is again $(\lambda(i-1,i) +1) \bmod \Delta$.
		This enforces $w_{\lambda}^-(i+1)>w_{\lambda}^-(i)$ only if the waiting time  exceeds the previous maximum waiting time, which means
		$(\lambda(i,\overline{i})-(\lambda(i+1,i)+1))\bmod \Delta = (\lambda(i-1,i)-\lambda(i+1,i))  \bmod \Delta =(x_i +1) \bmod \Delta> w_{\lambda}^-(i)$ (see \cref{fig:Zinken2}).
		Therefore there is an increase with $x_i \geq w_\lambda^-(i)$.
		
		That means with $\lambda(i,i-1)\equiv\lambda(i+1,i)+1 \pmod{\Delta}$ an increase of the waiting time $w_{\lambda}^-(i+1)>w_{\lambda}^-(i)$ is enforced if and only if
		$w_{\lambda}^+(i)+1 \ge x_i  \ge w_{\lambda}^-(i)$
		and analogously $w_{\lambda}^+(i+1)>w_{\lambda}^+(i)$ is enforced if and only if
		$w_{\lambda}^-(i)+1 \ge x_i  \ge w_{\lambda}^+(i)$.
		
		Starting with the equal labeling $\lambda(0,\overline{0})=\lambda(\overline{0},0)$ and no waiting at vertex 0,
		we get $w_{\lambda}^-(i) = w_{\lambda}^+(i)=0$ for $1\le i\le h:=\Delta/2$ and only here,
		we get $\lambda(h-1,h)=\lambda(h,h-1)=h$ enforcing an increase of the waiting time
		to $w_{\lambda}^-(h+1)=w_{\lambda}^+(h+1)=1$.
		But here again we have $x_{h+1}=\lambda(h,h+1)-\lambda(h+1,h)=w_{\lambda}^+(h+1)$ enforcing an increase of the waiting time to $w_{\lambda}^-(h+2)=w_{\lambda}^+(h+2)=2+1=3$.
		By induction on $j$, we have $x_{h+j}=w_{\lambda}^+(h+j)$ enforcing an increase of the waiting time to $w_{\lambda}^+(h+j+1)=w_{\lambda}^+(h+j+1)=2j+1$.
		
		\begin{figure}[t]
			\begin{minipage}[b]{1\textwidth}
				\centering
				\includegraphics[scale=0.6]{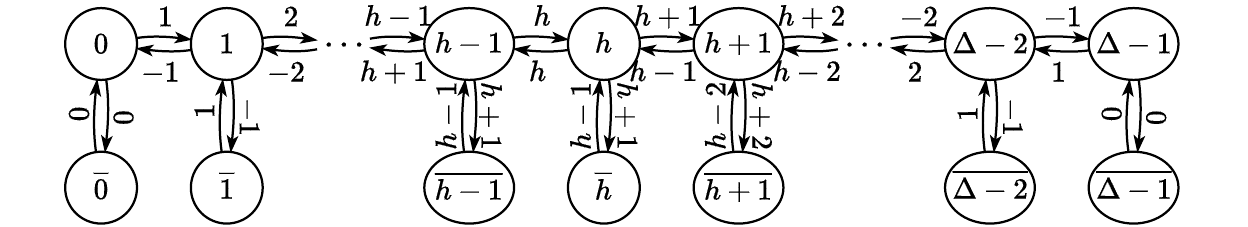}
				\caption{A subgadget with a feasible labeling without waiting on the main path. Observe that $\lambda$ is symmetric on the subgadget.}
				\label{fig:SubgadgetG0w}
			\end{minipage}
		\end{figure}
		
		This just reaches $\Delta-1$ at the end of the subgadget as shown in \cref{fig:SubgadgetG0w} with the allowed waiting time of $k=2h-3$.
		However, waiting on the main path in one of the two directions as in \cref{fig:Zinken3} would not help
		since $x_{h+j}=w_{\lambda}^+(h+j)+1$ still enforces an increase of the waiting time.
		
		The same holds if we start differently as can be seen in the following consideration:
		If we start with values where $\lambda(0,\overline{0})-\lambda(\overline{0},0)$ is even then we would get the latest increase with $\lambda(0,\overline{0})=\lambda(\overline{0},0)$, since $x_i=2i \pmod{\Delta}$ assumes all even values $< \Delta$ once. Values $\lambda(0,\overline{0})\not=\lambda(\overline{0},0)$ would mean a larger difference $x_0$ and would lead to $x_i=0$ already for an $i<h$ leading to an earlier increase. 
		Like before, waiting on the main path in one direction does not change this.
		
		In case $\lambda(0,\overline{0})-\lambda(\overline{0},0)$  is odd we get the latest increase with $x_0=1$, since $x_i=2i+1 \pmod{\Delta}$ assumes all odd values $< \Delta$ once.
		A larger difference $x_0$ would cause
		the increase to happen at latest at vertex $h$, which requires an additional increase at vertex $h+1$ and this in turn an increase at vertex $h+2$ and so on.
		This is shown in \cref{fig:GadgetG0vw}.
		The increase progresses like this until we reach vertex $\Delta-2$ with $x_{\Delta-2}=k$ and $w_\lambda^-(\Delta-2)=w_\lambda^+(\Delta-2)=k-1$, which also enforces an increase.
		We get a maximum waiting time of $k+1$ at vertex $\Delta-1$, therefore not producing a feasible labeling.
		Waiting once in one of the two directions in this case would mean that $\lambda(\overline{\Delta-1},\Delta-1)-\lambda(\Delta-1,\overline{\Delta-1})$ is even. 
		Therefore, we can apply symmetry to show that waiting once still does not help.
	\end{claimproof}
	
	\begin{figure}[tb]
		\begin{minipage}[t]{1\textwidth}
			\centering
			\includegraphics[scale=0.6]{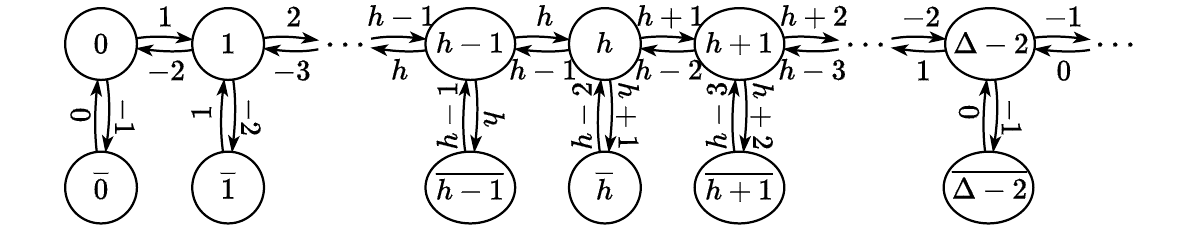}
			\caption{A labeling for the subgadget starting with $x_0$ odd that would lead to $w_\lambda^+(\Delta-1)=k+1$.}
			\label{fig:GadgetG0vw}
		\end{minipage}
	\end{figure}
	
	We can construct a solution by choosing $\lambda$ such that there is no waiting time in any of the subgadgets and for all subgadgets the following holds: $\lambda(0,\overline{0})=\lambda(\overline{0},0)=\lambda(\Delta-1,\overline{\Delta-1})=\lambda(\overline{\Delta-1},\Delta-1)$. 
	As the paths consist of $\Delta-1$ vertices and consequently $\Delta-2$ edges we have to wait one time in each direction on the path for this to be possible.
	As the gadget contains $k$ paths this does not exceed the at most allowed waiting time.
	
	If we do not wait two times on any path we prevent one of the neighboring subgadgets from starting with equal values in both directions on the first or last edge. 
	Because of \cref{claimEven}  we would have to wait at least two times in the respective subgadget.
	Therefore, waiting at least two times is required in the context of each of the $k$ paths and thus, no waiting is possible in the first subgadget.
	Hence, for the first subgadget, which is not contained in the context of any path, $\lambda(0,\overline{0})=\lambda(\overline{0},0)$ holds.
\end{proof}

\begin{theorem}\label{thm:RedToDir}
	For every $\Delta$ and $k$ with $\Delta >k+2$ or $\Delta=k+2$ with $\Delta$ odd, $\Delta$-$k$-\textsc{DiTTR} is NP-complete. Therefore the problem \textsc{DiTTR} is NP-complete.
\end{theorem}
\begin{proof}
	The construction in Proposition 6 of ~\cite{ubTTR_pre} 
	to show that TTR is NP-complete holds for any $\Delta\geq 3$ and all values of $k$ with $k+2\leq \Delta$, because $D_{u,v}\in \{\hat{d}(u,v)+\Delta-2,\infty\}$. Note that the reduction from vertex coloring implies strong NP-completeness.
		This covers all cases as claimed in \cref{fig:hardness}.
		We can combine \cref{lemma:gadgetsRedu} and the lemmas referenced in \cref{fig:hardness} to show that for every $\Delta$ and $k$ with $\Delta >k+2$ or $\Delta=k+2$ with $\Delta$ odd, $\Delta$-$k$-\textsc{DiTTR} is NP-complete.
		That also means \textsc{DiTTR} is NP-complete since already the special versions $\Delta$-$k$-\textsc{DiTTR}, where $k$, which is an implicit parameter of $D$ in the input, and $\Delta$ are two constants in one of the cases (d)-(h) in \cref{fig:hardness}, are NP-complete.
		This also holds if $\Delta$ is part of the input, as \textsc{TTR} is strongly NP-complete.
\end{proof}

{Remark: The proof of \cref{thm:RedToDir} also implies that for even $\Delta>2$ and $\Delta=k+2$ the undirected problem version is NP-complete, while every instance of the corresponding directed problem version is realizable, as indicated in \cref{fig:hardness}.

%-------------------------------------------

\section{Conclusion and Further Research}
\label{sec:conclusions}
In this paper, we have initiated the study of the directed version of the graph realization problem for periodic temporal graphs subject to pairwise upper bounds on the fastest paths.
We obtained hardness results for several special cases and identified some easily solvable ones. For trees, we provided a full characterization for all periods $\Delta$ and  all values of the minimum slack parameter
 $k$, giving a lower bound on the maximum allowed waiting time on each path.  

For future work, many problem variants are worth further consideration. An interesting extension would be to also consider upper bounds on the slack. Instead of uniform bounds on the slack, one could also consider multiplicative bounds to reflect that more waiting is acceptable on longer paths. Or one could turn the problem into an optimization problem where one wants to minimize some measure of the deviation from the fastest paths or the desired quality of service.   
In some practical applications, it is useful to further restrict the labeling with additional constraints. For example, when planning train or tram timetables for single-track lines, it is necessary to ensure that a corresponding track section is only served in one direction at a time. Thus, an interesting type of constraint could be to require a solution with $\lambda(a,b)\neq \lambda(b,a)$ for all (or only certain specified) pairs of vertices $a,b\in V$.
According to the proof of \cref{thm:2DTGR_NPC}, this is NP-complete for $\Delta=2$ for general graphs, which motivates the problem for trees but the constructions in this paper using \cref{thm:RedToDir} do not work with this property. 

Further theoretical investigations may consider more general graph classes than just trees.
Finally, it would be interesting to  investigate the practical solvability of instances derived from real network topologies.

\bibliography{Realization}
\end{document}